\documentclass[10pt,twocolumn,twoside]{IEEEtran} 

\usepackage{amsmath,mathrsfs}
\usepackage{amsthm,tikz}
\usepackage{amssymb}
\usepackage{amsfonts}
\usepackage{amsxtra}     
\usepackage{epsfig}
\usepackage{verbatim}
\usepackage[ruled]{algorithm2e}
\usepackage{multirow}
\usepackage{caption,color}
\usepackage{url}

\DeclareMathOperator*{\sinc}{sinc}
\DeclareMathOperator*{\arcsine}{arcsin}
\DeclareMathOperator*{\psinc}{psinc}

\theoremstyle{plain}
\newtheorem{theorem}{Theorem} 

\makeatletter
\def\blfootnote{\xdef\@thefnmark{}\@footnotetext}
\makeatother

\theoremstyle{definition}
\newtheorem{definition}[theorem]{Definition}

\def\complex { \mathbb{C} }

\newcommand{\E}{\mathbf E}

\hyphenation{jaeypark}

\begin{document}

\title{Modal Analysis with Compressive Measurements}

\author{Jae Young~Park,
        Michael B.~Wakin,
        and~Anna C.~Gilbert
\thanks{JYP and ACG are with the Department of Electrical Engineering and Computer Science at the University of Michigan. Email: jaeypark,annacg@umich.edu.
MBW is with the Department of Electrical Engineering and Computer Science at the Colorado School of Mines. Email: mwakin@mines.edu. This work was partially supported by NSF grants CCF-1161233 and CIF-0910765, 
AFOSR grant FA9550-09-1-0465, 
and NSF CAREER grant CCF-1149225. 
}
}

\maketitle

\begin{abstract}
Structural Health Monitoring (SHM) systems are critical for monitoring aging infrastructure (such as buildings or bridges) in a cost-effective manner.
Such systems typically involve collections of battery-operated wireless sensors that sample vibration data over time.
After the data is transmitted to a central node, modal analysis can be used to detect damage in the structure.
In this paper, we propose and study three frameworks for Compressive Sensing (CS) in SHM systems; these methods are intended to minimize power consumption by allowing the data to be sampled and/or transmitted more efficiently.
At the central node, all of these frameworks involve a very simple technique for estimating the structure's mode shapes without requiring a traditional CS reconstruction of the vibration signals; all that is needed is to compute a simple Singular Value Decomposition.
We provide theoretical justification (including measurement bounds) for each of these techniques based on the equations of motion describing a simplified Multiple-Degree-Of-Freedom (MDOF) system, and we support our proposed techniques using simulations based on synthetic and real data.
%
%
%
\end{abstract}

\begin{IEEEkeywords}
Compressive Sensing, modal analysis, Structural Health Monitoring, singular value decomposition
\end{IEEEkeywords}

\section{Introduction}

\subsection{Structural Health Monitoring systems}

Over the past decade, more than 5 million commercial buildings~\cite{cbecs01}, 130 million housing units~\cite{cbecs02}, and 0.6 million bridges~\cite{cbecs03} have been built in the United States.
In any structure, damage caused over time by continuous use is inevitable.
In order to maintain safely operable structures for as long as possible, periodic inspections are a must.
When damage is detected, some structures can be repaired, while others must be taken out of service immediately.

Due to the quantity, size, and complexity of structures, the task of inspection is labor intensive, costly, and time consuming.
Consequently, there have been significant efforts in the structural engineering community to automate this process.
Structural Health Monitoring (SHM) systems are precisely designed to address this issue.

Although the details of SHM systems vary, some features are shared among many methods~\cite{Sohn_Farrar_Hemez_Czarnecki_2001,Doebling_Farrar_Prime_Shevitz_1996}.
A typical SHM system monitors an in-service structure in real-time.
To do so, it makes use of a network of sensors installed on the structure to collect vibration data for damage detection.
This may include strain data, acceleration data, velocity data, or displacement data.
The acquired data from each sensor is transmitted over a network to the central data repository where damage detection algorithms are run to detect, localize, or classify possible damage in the structure. 

An important part of damage detection is a process called modal analysis.
This process is used to infer properties such as the modal frequencies, mode shapes, and modal damping ratios of the structure.
Such modal parameters describe the vibrational characteristics when external forces such as wind, earth quakes, or vehicle loadings are applied to the structure.
For example, if a structure is forced to vibrate close to a modal frequency, the shape of the structure's vibration will be dominated by the corresponding mode shape. This vibration will eventually die out in the absence of external force, and the modal damping ratio will determine rate of decay.

Many damage detection algorithms make use of modal parameters to detect, localize, and assess the severity of damage.
Briefly speaking, these methods rely on the notion that when a structure is damaged, its modal parameters will change.
Assuming that one has the modal parameters from the time when the structure was healthy, these can be compared to the current estimates of modal parameters to judge whether or not damage has occurred.
A comprehensive survey of damage detection methods is presented in~\cite{Sohn_Farrar_Hemez_Czarnecki_2001,Doebling_Farrar_Prime_Shevitz_1996}.

\subsection{Wireless SHM systems}

In the early designs of SHM systems, sensors were linked via coaxial cables that provided reliable communication to the central data repository.
Power cables were coupled with the coaxial cables to provide the power to run the sensors.
Unfortunately, it was often impractical to install a dense network of sensors along with all of the requisite cables.
When only a few sensors could be installed, the accuracy of damage detection and analysis would be sacrificed.

As a way to overcome this issue, researchers have proposed to deploy wireless sensors on structures to acquire and transmit data to the central repository.
With the rapid advancement of wireless technology and the ability to build sensors at low cost, it has become possible to quickly deploy a much denser network of sensors for a given budget.

The challenges of wireless sensor networks in SHM are quite different from those of wired sensor networks.
In particular, a primary challenge in a wireless network is to maximize the life expectancy of the batteries that power the sensors.
From this perspective, there are important questions that should be considered when designing a wireless sensor network for SHM:
Should we compress the time data before we transmit it to the central data repository?
This would allow us to transmit less information but would require additional power for processing.
Or, would we save more battery life by sending the raw data itself without processing it at all?

In \cite{Lynch15022007}, the authors presented thorough answers to these questions in real world applications involving a certain wireless sensor.
The authors found that on average, significant savings in battery can be expected by locally compressing and processing the data first compared to sending the raw time data.
The main reason for this was that the radio drains much more power (almost 3$\times$ more) than the on-board processor.
As long as the execution time of the local algorithm is reasonably short, it would be more beneficial to first process and compress, then transmit less information.

Another factor that contributes to the draining of the battery is the sampling process.
The Nyquist-Shannon sampling theorem states that when sampling a signal, the sampling rate should be proportional to the maximum frequency content in the signal.
For the purpose of modal analysis the highest frequency content is dictated by the highest modal frequency of the underlying structure.
Intuitively, structures that are stiff and light will generally posses high modal frequencies, and for such structures we may have to sample at relatively high
rates.
Unfortunately, faster sampling requires more power.
The total length of the sampling time span also plays an important role in optimizing the battery life.
Obviously, it would be best to sample only for a certain amount of time and turn the sensor off once a sufficient amount of data has been collected.

\subsection{Compressive Sensing in wireless SHM systems}

In light of these observations, we believe that wireless sensors equipped with Compressive Sensing (CS) architectures will be a perfect fit to improve the efficiency and accuracy of wireless SHM systems.
The theory of CS has in recent years offered the great promise of efficiently capturing essential signal information with low-rate sampling protocols, often below the minimum rate required by the Nyquist sampling theorem~\cite{Donoho06,Candes06,Wakin12}.
The use of CS can dramatically reduce both the complexity of a sensor and the amount of data that must be stored and/or transmitted downstream.

Employing CS in wireless SHM systems would allow compressed data to be acquired directly without the need for local processing.
This enables power savings in several ways: First, because the data can be collected at a rate lower than the Nyquist rate the physical demands on the sensing hardware will be lower.
Second, because there is no need for local processing, there is no computational burden at the sensor.
Third, because the data is acquired in compressed form, the amount of information that must be wirelessly transmitted to the central data repository is minimized.
Finally, all of the sensors can acquire the time data in a completely disjoint fashion,\footnote{In this paper, we assume that the time samples obtained at different sensors are all synchronized in time.} eliminating the need for sensors to communicate while compressing their data.

Mathematically, denoting the continuous-time displacement signal at each sensor as $v_l(t)$, where $l=1,\dots,N$, and $N$ denotes the number of nodes, a CS architecture employing non-uniform sampling (which is just one possible CS protocol~\cite{Wakin12}) would simply sample at time points $t_1,\dots,t_M$ and transmit the resulting samples $v_l(t_1),\dots,v_l(t_M)$.
The compression would come from the fact that $M$ is smaller than the number of Nyquist samples obtained over an appropriately chosen total sampling time span.\footnote{We use the word ``compression'' to reflect the fact that fewer samples are transmitted. A deeper analysis---beyond the scope of this paper---would consider specific binary encodings of the raw and CS data and quantify the actual compression savings in terms of bits.}
Typically, one would choose $t_1,\dots,t_M$ randomly.
%
%

Typically, when we undersample a signal compared to its Nyquist rate, we must solve an underdetermined system of equations in order to reconstruct the original signal (this requires a sparse model for the signal in some basis).
There have been several papers involving the application of CS to SHM systems.
In~\cite{CSDD2011}, the authors implement a prototype wireless sensor that computes the compressed measurements locally after the wireless sensor has acquired the time data.
By sending both compressed measurements and the raw time measurements to the central node, the authors evaluate the performance of CS reconstruction of the raw data.
It is shown that a number of measurements $M\approx 0.8L$ is needed for an accurate reconstruction of the raw data, where $L$ represents the length of the original Nyquist-rate sample vector.
Once the time data is reconstructed, conventional modal analysis techniques are carried out that aid in damage detection.
In \cite{YuequanBao01052011}, the authors also reconstruct the original time data from CS measurements obtained at each sensor.
Similar to~\cite{CSDD2011}, the authors claim that a relatively large number of measurements are required for accurate reconstruction.

The main reason that the above methods require so many measurements is because the signals are simply not sparse enough in the Discrete Wavelet Transform (DWT) or Discrete Fourier Transform (DFT).
This suggests that the model of sparsity may not be sufficiently strong for the individual signals that arise in modal analysis.
A joint signal model for the entire signal ensemble could serve as a better model to exploit the correlations that potentially exist across the signals.

At this point it is worth asking whether signal reconstruction is necessary at all when employing CS in wireless SHM systems.
The only reason that the above methods attempt to reconstruct the original signals is that conventional modal analysis tools are designed to operate on signal samples captured at or above the Nyquist rate.
There are two reasons to question whether this approach is optimal.
The first reason is the potential loss of performance.
The frequency content within the signals plays an important role in the subsequent modal analysis.
For example, a popular modal analysis algorithm proposed in~\cite{Brincker_Zhang_Andersen_2000} is very sensitive to the accuracy of each frequency component of the signal.
As we anticipate noise in our acquired signals, the reconstruction of each signal will also be noisy.
These errors will propagate to the modal analysis step, which could potentially lead to misidentification and errors in the estimated modal parameters.

The second reason is the added computational complexity.
Taking the above method for example, using an off-the-self CS reconstruction algorithm presented in~\cite{nt09}, the total computational complexity for $N$ signals will scale as $NL\log^2(L)$.
Added to this will be the computation required for the actual modal analysis.
One can view the reconstruction step as being somewhat wasteful as it is carried out only to enable the use of conventional modal analysis algorithms.

\subsection{Contributions}

The main objective of this paper is to propose a novel method for directly extracting the mode shapes from CS measurements without the need for reconstructing the individual time signals.
Our proposed method differentiates itself from the previously proposed CS-based methods in that it exploits the joint signal structure that arises from the equations describing a simplified (no damping and free-decay) Multiple-Degree-Of-Freedom (MDOF) system.

Our method is as simple as computing the Singular Value Decomposition (SVD) of the signal matrix obtained by stacking each sample vector $\{v_l\}=\{v_l(t_1),\dots,v_l(t_M)\}^*$ into a matrix denoted as $[V]=[\{v_1\},\dots,\{v_n\}]^*$. Here, the superscript $^*$ denotes the conjugate transpose operator (we explain the use of complex-valued data in Section~\ref{sec:analytic}).
We evaluate the performance of this method both when $t_1,\dots,t_M$ are chosen deterministically with uniform spacing and when $t_1,\dots,t_M$ are chosen uniformly at random.
We also evaluate the performance when we compute the SVD of $[Y]$ which is formed by choosing $t_1,\dots,t_M$ as uniform deterministic time points to form $[V]$ and subsequently multiplying $[V]$ by an $M\times M'$ ($M'<M$) random matrix $[\Phi]$ such that $[Y]=[V][\Phi]$.

For each sampling method, we give sufficient conditions on the required sampling rate, the total sampling time span, and the total number of measurements for accurate recovery of mode shape vectors.
Our analysis reveals that the requisite sampling rate for uniform sampling can be lower than the Nyquist rate, but that the required number of samples is structure dependent.
For random sampling, our analysis reveals that the required number of samples is structure independent and that we can achieve the same recovery guarantee as for uniform sampling, but the number of samples has a slightly increased dependence on the number of sensor nodes.
Finally, our analysis for the scheme involving uniform sampling followed by random matrix multiplication shows that the requisite number of measurements (the number of columns of $\Phi$) is dependent on the rank of $[V]$.
At the end of this paper, we present promising simulation results showing that our methods can accurately estimate the mode shapes using a number of samples or measurements that is only a small fraction of the original signal length.

\section{Background}

In this section, we give an introduction to the frequently used mathematical model that governs the motion of structures.
These equations form the basis of our proposed method.
We begin with the simple Single-Degree-Of-Freedom (SDOF) system and then move on to the MDOF system.
Following what is standard in the structural dynamics community, we use $\{x\}$ to denote a vector $x$ and $[A]$ to denote a matrix $A$.
We denote the $l$th entry of $\{x\}$ as $\{x\}(l)$, and the entry of $[A]$ in the $l$th row and $n$th column as $[A]_{l,n}$.
Furthermore, we reserve $i=\sqrt{-1}$ to denote the imaginary unit. 

\subsection{Single-degree-of-freedom system}

An SDOF system under no external force can be described by the following differential equation:
\begin{equation}
m\ddot{x}(t)+C\dot{x}(t)+k{x}(t)=0,
\label{eqn:sdof}
\end{equation}
where $m$, $C$, and $k$ denote the mass, damping, and stiffness parameters of the underlying system.
To solve for the displacement signal $x(t)$ that satisfies the above equation, let us assume a solution of the form $x(t)=Ae^{st}$, where $A,s\in\mathbb{C}$.
Then, $\ddot{x}(t)=As^2e^{st}$, $\dot{x}(t)=Ase^{st}$, $x(t)=Ae^{st}$, and by plugging in these expressions into equation~\eqref{eqn:sdof} we get $\left(ms^2+Cs+k\right)x(t)=0$.
Since this needs to be satisfied for all $t$, it must be that $ms^2+Cs+k=0$, and it is easy to see that $s=\frac{-C\pm\sqrt{C^2-4mk}}{2m}$.
In the structural dynamics community it is customary to rewrite this as
\begin{equation}
s=-\xi\omega_0\pm\omega_0\sqrt{\xi^2-1},
\label{eqn:sdofsolution}
\end{equation}
where $\omega_0=\sqrt{\frac{k}{m}}$ and $\xi=\frac{C}{2m\omega_0}$ represent the natural frequency and damping ratio, respectively.
As we can see, the natural frequencies and damping ratios will always be positive, and depending on the value of $\xi$, $s$ may be real or complex and there may be one or two possible solutions.

In this paper, we restrict ourselves to the case when there is no damping, i.e., $C=0$, and thus $\xi=0$.
From equation~\eqref{eqn:sdofsolution} we can see that when $\xi=0$, we have two purely imaginary solutions $s_1=i\omega_0$, and $s_2=-i\omega_0$.
Thus, $x_1(t)=A e^{i\omega_0t}$, and $x_2(t)=B e^{-i\omega_0t}$ are both eligible solutions to equation~\eqref{eqn:sdof}.
In fact, any solution to the above equation can be expressed as a linear combination of $x_1(t)$ and $x_2(t)$, such that the general solution can be written as $x(t)=A e^{i\omega_0 t}+B e^{-i\omega_0 t} =(A+B)\cos(\omega_0 t)+i(A-B)\sin(\omega_0 t)$.
%
%
Furthermore, noting that $x(0)=A+B$ and $\dot{x}(0)=i (A-B)\omega_0$, it follows that $x(t)=x(0)\cos(\omega_0 t)+\frac{\dot{x}(0)}{\omega_0}\sin(\omega_0 t)$.

Since we want to deal with real valued signals $x(t)$, this demands that $B=A^{*}$.
Denoting $A=a+ib$, we can once again rewrite the solution as $x(t)=2a\cos(\omega_0 t)-2b\sin(\omega_0 t)$.
Finally, because any linear combination of sines and cosines with the same frequency is also a sine wave with the same frequency, we can rewrite this as $x(t)=\rho\sin(\omega_0 t+\theta)$, where
\begin{equation}
\rho=2\sqrt{a^2+b^2} ~\text{and}~
\theta=
\left\{
\begin{array}{ll}
\arcsine\left(\frac{a}{\sqrt{a^2+b^2}}\right),& b\leq 0, \\
\pi-\arcsine\left(\frac{a}{\sqrt{a^2+b^2}}\right),& b>0.
\end{array}
\right.
\label{eqn:SDOFrhotheta}
\end{equation}

\subsection{Multiple-degree-of-freedom system}
\label{subsec:mdof}

Similar to the SDOF system, an $N$-degree MDOF system\footnote{Theoretically, a structure will have infinitely many degrees of freedom. However, the number of mode shapes that one can detect is equal to the number of sensors placed on the structure. In the following, whenever we deal with an $N$-degree MDOF system, we are implicitly assuming that we have $N$ sensor nodes deployed on the structure.} can be formulated as
\begin{equation*}
[M]\{ \ddot{u}(t)\} +[C]\{ \dot{u}(t)\}+[K]\{ u(t)\} = \{0(t)\},
\end{equation*}
where $[M]$ is an $N\times N$ diagonal mass matrix, $[C]$ is a symmetric $N\times N$ damping matrix, $[K]$ is an $N\times N$ symmetric stiffness matrix, and $\{u(t)\}$ is an $N\times 1$ vector of displacement signals.
Note that $\{u(t)\}=\{u_1(t),\dots,u_N(t)\}$, and each $\{u(t)\}(l)=u_l(t)$, $l=1,\dots,N$, is a displacement signal.
One can view $u_l(t)$ as the signal being observed at the $l$th sensor node.

Again we consider an undamped system and set $[C]=[0]$. This simplifies the above equation to $[M]\{ \ddot{u}(t)\} + [K]\{ u(t)\} = \{0(t)\}$.
Let us assume $\{u(t)\}=\{\psi\}Ae^{i\omega t}$ to be a solution to this equation, where $\omega,t\in\mathbb{R}$. Here, $\{\psi\}$ is an $N\times 1$ spatial vector that is independent of time; we define it to have unit energy, i.e., $\|\{\psi\}\|_2=1$ (we can assume this as the normalization can be absorbed into the scalar variable $A$).
Plugging in the appropriate derivative to the above expression we get $\left(-[M]\omega^2 + [K]\right)\{\psi\}Ae^{i\omega t} = \{0(t)\}$.
Since this must hold for all values of $t$ it must be that
\begin{equation}
\left([K]-\omega^2[M]\right)\{\psi\} = \{0\}.
\label{eq:eigKpsi}
\end{equation}
The above represents a generalized eigenvalue problem and our objective is to find pairs of $\omega^2$ and $\{\psi\}$ that satisfy this equation.
Notice the similarity to the conventional eigenvalue problem, which corresponds to the case when $[M]=I$.
To solve the above problem, as in the conventional eigenvalue problem, one starts off by computing the generalized eigenvalue $\omega^2$ that satisfies
\begin{equation}
\text{det}\left([K]-\omega^2[M]\right)=0.
\label{eq:eigdet}
\end{equation}
Assuming that this does not vanish as a function of $\omega^2$ and that $M$ is full rank, the left hand side of \eqref{eq:eigdet} will represent an $N$th order polynomial and there will be $N$ generalized eigenvalues $\omega^2$ as roots of this polynomial.
Each $\omega^2$ when plugged back in to \eqref{eq:eigKpsi} will have a corresponding generalized eigenvector $\{\psi\}$.
Thus, there will be $N$ generalized eigenvalues $\omega_1,\dots,\omega_N$, which are known as the modal frequencies, and $N$ corresponding generalized eigenvectors $\{\psi_1\},\dots,\{\psi_N\}$, which are known as the mode shape vectors.
Each modal frequency will be real and positive, and the mode shape vectors will be orthonormal to one another.
Without loss of generality, we assume the frequencies are sorted such that $\omega_1\geq\omega_2\geq\dots\geq\omega_N>0$.

It is clear that each $\{\psi_n\}A_ne^{i\omega_n t}$, $n=1,\dots,N$, will be a valid solution to the MDOF system equation.
Furthermore, as in the SDOF system $\{\psi_n\}B_ne^{-i\omega_n t}$ will also be a valid solution and thus for each $n$ a complete solution will be of the form $\{\psi_n\}\left(A_n e^{i\omega_n t} + B_ne^{-i\omega_n t}\right)$.
We can guarantee this solution is real by ensuring that $\{\psi_n\}$ is real (there exists a real eigenvector that satisfies the above equation given that the mass and stiffness matrices are real and symmetric) and $B_n=A_n^{*}$.
Thus, as with the SDOF case we can rewrite this solution as $\{\psi_n\}\rho_n\sin(\omega_n t +\theta_n)$, where $\rho_n$ and $\theta_n$ are as defined in~\eqref{eqn:SDOFrhotheta}.
Finally, it is easy to see that all linear combinations of this solution are valid solutions to the MDOF system equation and thus the general solution is of the form
\begin{equation}
\{u(t)\}=\sum_{n=1}^{N}\{\psi_n\}\rho_n\sin(\omega_n t +\theta_n).
\label{eqn:modalsuperposition}
\end{equation}
In the structural dynamics community this is known as the modal superposition equation.

\section{Problem Formulation}

\subsection{The analytic signal of $\{u(t)\}$}
\label{sec:analytic}

Our framework and analysis will involve sampling what is known as the analytic signal of $\{u(t)\}$~\cite{boualem03}.
\begin{definition}[Definition 1.2.1,~\cite{boualem03}]
A signal $v(t)$ is said to be {\em analytic} iff $V(f)=0~\text{for}~f<0$, where $V(f)$ is the Fourier transform of $v(t)$.
\end{definition}
An analytic signal can be obtained by removing all negative frequencies in a given signal.
The analytic signal is a frequently used representation in mathematics, signal processing, and communications; in some problems (such as ours) it can simplify the mathematical manipulations.

To discuss the analytic signal of $\{u(t)\}$, let us examine each entry in $\{u(t)\}$, i.e., $u_l(t)$. Based on the derivation in Section~\ref{subsec:mdof}, each $u_l(t)$ can be written as
\begin{align*}
u_l(t) 
&=\sum_{n=1}^{N} \{\psi_n\}(l)(A_n e^{i\omega_nt}+A_n^{*} e^{-i\omega_nt}).
\end{align*}
Thus, the analytic signal of $u_l(t)$, represented as $v_l(t)$, is simply
$$
v_l(t)=\sum_{n=1}^{N} \{\psi_n\}(l)A_n e^{i\omega_nt},
$$
and the analytic signal of the entire vector $\{u(t)\}$, denoted as $\{v(t)\}$, can be written as
\begin{equation}
\{v(t)\}=\sum_{n=1}^{N} \{\psi_n\}A_n e^{i\omega_nt}.
\label{eq:modsupeqn}
\end{equation}
Note that $\{v(t)\}$ is no longer real but complex.

Obtaining an analytic signal in practice involves the application of a Hilbert transform.
However, detailed discussion of this matter is out of scope of this paper and we will refer interested readers to~\cite{frederick09} for more information.
For the remainder of this paper, we will assume that we have successfully extracted the analytic signal from each $u_l(t)$.
Thus, all derivations from here onwards will be in terms of $\{v(t)\}$.

\subsection{The relationship to the SVD}

We can write the modal superposition equation \eqref{eq:modsupeqn} in matrix-vector multiplication format as
\begin{equation*}
\{v(t)\} = [\Psi][\Gamma]\{s(t)\}, 
\end{equation*}
where $[\Psi] = [\{\psi_1\}, \{\psi_2\},\dots,\{\psi_N\} ]$ denotes the $N\times N$ mode shape vector matrix, which as mentioned before, has orthonormal columns,
$$
[\Gamma] = \sqrt{M} \cdot \left[
\begin{array}{cccc}
A_1 & 0 & \dots & 0 \\
0 & A_2 & \dots & 0 \\
\vdots & \vdots & \ddots & \vdots \\
0 & 0 & \dots & A_N \\
\end{array}
\right ]
$$
denotes an $N\times N$ diagonal matrix, and
$$
\{s(t)\} = \frac{1}{\sqrt{M}} \cdot \left\{
\begin{array}{c}
e^{i\omega_1t}\\
e^{i\omega_2t}\\
\vdots \\
e^{i\omega_Nt}\\
\end{array}
\right\}
$$
denotes an $N\times 1$ modal coordinate vector.

In order to see how the SVD could be useful in extracting the modal parameters, let us suppose that we sample each row of $\{v(t)\}$ at $M$ distinct points in time $t_1,\dots,t_M$.
We assume $M \ge N$ and denote the resulting $N\times M$ data matrix as
\begin{equation}
[V]=\left[
\begin{array}{cccc}
v_1(t_1) & v_1(t_2)&\dots & v_1(t_M) \\
v_2(t_1) & v_2(t_2)&\dots & v_2(t_M) \\
\vdots \\
v_N(t_1) & v_N(t_2)&\dots & v_N(t_M) \\
\end{array}
\right]
\in\mathbb{C}^{N\times M}.
\label{eq:V}
\end{equation}
The sampling of $\{v(t)\}$ at $t_1,\dots,t_M$ implies the sampling of $\{s(t)\}$ at the exact same time points which leads us to define
\begin{equation*}
[S]= \frac{1}{\sqrt{M}} \cdot \left[
\begin{array}{cccc}
e^{i\omega_1t_1} & e^{i\omega_1t_2}&\dots & e^{i\omega_1t_M} \\
e^{i\omega_2t_1} & e^{i\omega_2t_2}&\dots & e^{i\omega_2t_M}\\
\vdots \\
e^{i\omega_Nt_1} & e^{i\omega_Nt_2}&\dots & e^{i\omega_Nt_M}
\end{array}
\right]
\in\mathbb{C}^{N\times M}
\end{equation*}
and allows us to write the matrix of samples as
\begin{equation}
[V]=[\Psi][\Gamma][S].
\label{eqn:svdofv}
\end{equation}

Equation \eqref{eqn:svdofv} makes explicit the relationship between the SVD and the modal parameters.
We know that $[\Psi]$ is a square matrix with orthonormal columns, and $[\Gamma]$ is a diagonal matrix.
Hypothetically, if $[S]$ happened to be a matrix with orthogonal (or orthonormal) rows, then equation~\eqref{eqn:svdofv} would precisely describe the SVD of $[V]$.
In that case, one could obtain the modal parameters by simply computing the SVD of $[V]$!
%

%
As an example, the rows of $[S]$ would be perfectly orthogonal if they happened to equal $N$ distinct length-$M$ DFT vectors.
This could be ensured if we sampled at uniform times $t_m=T_s (m-1)$, where $T_s$ is a sampling interval, $m\in\{1,\dots,M\}$, and $M \ge N$, but it would require the modal frequencies to lie on a grid such that $\omega_n=\frac{2\pi k_n}{MT_{s}}$, where $k_n\in\{1,\dots,M\}$.
If these conditions were satisfied, the SVD of $[V]$ would exactly recover the modal parameters.
Unfortunately, this is an unrealistic model for the purpose of modal analysis because the modal frequencies will typically not lie on a grid.

If we drop the assumption that the modal frequencies lie on a grid, the problem becomes much more complicated, and in general the rows of $[S]$ will not be orthogonal.
However, our main results, presented in Section~\ref{sec:ourmainresults}, rely on characterizing sampling strategies that ensure the rows of $[S]$ will still be nearly orthogonal and showing in these situations that the mode shape vectors can be accurately estimated by computing the SVD of $[V]$.

\subsection{SVD in modal analysis}

Among the many techniques that have been proposed for modal analysis, we briefly mention a few that also make use of the SVD.
The Ibrahim Time Domain (ITD)~\cite{itd1977} method shares a number of similarities with our proposed method. This method also begins with the MDOF model and sets up an equation that relates the modal parameters to the observations akin to~\eqref{eqn:modalsuperposition}. After further algebraic manipulations, a set of equations reveals that one can extract the modal parameters via an eigendecompsition. Based on this observation, the ITD method obtains estimates of the modal parameters by computing the eigendecomposition of a matrix that is a function of the observed data matrix.
In the Frequency Domain Decomposition (FDD)~\cite{Brincker_Zhang_Andersen_2000} method, the observations are used to compute cross power spectral density estimates. These cross power spectral density estimates are collectively a 3D data cube that consists of a 2D cross spectral matrix at each frequency. Given these estimates, the SVD is used to extract the singular vectors of the 2D cross spectral matrices. These singular vectors provide estimates of the mode shape vectors.
The Eigensystem Realization Algorithm (ERA)~\cite{era1985} method makes use of ideas in control theory and sets up the problem with a state-space equation of an MDOF system. The main use of the SVD in this method is to decompose the Hankel matrix that can be constructed from measured impulse response data. The singular vectors and singular values of the Hankel matrix are then manipulated in order to form a system matrix describing the underlying system. The estimates of the modal parameters are then computed via an eigendecomposition on this system matrix.

All of these methods implicitly assume that the observations are sampled at uniform intervals at or above the Nyquist rate. Thus, these methods may not be directly applicable when the observations are sampled in a random fashion. Furthermore, to the best of our knowledge none of these methods are accompanied by error analysis or instructions on how long to sample the vibration signal.
In the next section, we present our proposed method along with detailed analysis providing error bounds and sufficient conditions on how to sample in order to guarantee faithful recovery of the mode shape vectors.

\section{Main Results}
\label{sec:ourmainresults}

In this section, we present our main results.
We propose three measurement schemes---uniform time sampling, random time sampling, and uniform time sampling followed by a random matrix multiplication---and for each measurement scheme we provide a sufficient condition for the accurate recovery of mode shapes via the SVD.
Proofs of all of our results appear in the Appendix.

The main focus of this paper is the recovery of the mode shape vectors.
However, in Section~\ref{csshm:sec:shmexp} we do provide a short discussion along with promising simulation results concerning the recovery of the modal frequencies.

In the results that follow, we use $\delta_{\min}$ and $\delta_{\max}$ to denote lower and upper bounds on the minimum and maximum separation of the modal frequencies.
In other words, we assume that $\delta_{\min}\leq \min_{l\neq n}|\omega_l-\omega_n|$ and $\delta_{\max}\geq \max_{l\neq n}|\omega_l-\omega_n|$.
Furthermore, we use $t_{\max}$ to denote the total sampling time span.
Finally for $a,b \in [0,1]$, the quantity $D(a||b):=a(\log(a)-\log(b))+(1-a)(\log(1-a)-\log(1-b))$ is known as the binary information divergence, or the Kullback-Leibler divergence~\cite{JTtailbounds}.

\subsection{Uniform time sampling and random time sampling}

Our proposed method for recovering the mode shape vectors from uniform time samples or from random time samples is very simple and is described in Algorithm~\ref{alg:algorithm}.
\begin{algorithm}[t]
	\KwIn{Data matrix $[V]$ as defined in \eqref{eq:V}}
	\KwOut{$\hat{[\Psi]}$ (left singular vectors of $[V]$)}
	${\rm SVD([V])}=[\hat{\Psi}][\hat{\Gamma}][\hat{S}]$ \\
\caption{Pseudo-code for mode shape estimation}
\label{alg:algorithm}
\end{algorithm}
In particular, our method simply computes the SVD of $[V]$ and returns the matrix of left singular vectors $[\hat{\Psi}]=[\{\hat{\psi}_1\},\dots,\{\hat{\psi}_N\}]$ as estimates of the true mode shape matrix $[\Psi]$.
One point to note about this algorithm is that because $[V]$ is $N\times M$, where we assume $M\geq N$, the dimensions of $[\hat{\Psi}]$, $[\hat{\Gamma}]$, and $[\hat{S}]$ will be $N\times N$, $N\times M$, and $M\times M$, respectively. These differ from the dimensions of their respective counterparts in~\eqref{eqn:svdofv}. Taking a closer look and noting that only $N$ diagonal entries in $[\hat{\Gamma}]$ are non-zero, we can compute the truncated SVD to obtain the desired dimensions.

\subsubsection{Uniform time sampling}
\label{subsec:uniformsampling}

Let us now suppose that we sample $\{v(t)\}$ uniformly in time with a uniform sampling interval denoted by $T_s$.
The sampling times are given by $t_m=(m-1)T_s$, $m=1,\dots,M$.
We are therefore sampling within the time span $[0,t_{\max}]$, where $t_{\max}:=(M-1)T_s$.
We can establish the following theorem.

\begin{theorem}
\label{thm:uniformsampling}
Let $[V]=[\Psi][\Gamma][S]$ be as given in~\eqref{eqn:svdofv} describing an $N$-degree-of-freedom system sampled according to the uniform sampling scheme described above.
For $0 < \epsilon < 1$, suppose we sample for a total time span of at least
\begin{equation}
t_{\max}\geq \frac{2\pi(\log\lfloor N/2\rfloor + 1.01)}{\epsilon\delta_{\min}}
\label{eq:tmaxuniform}
\end{equation}
with sampling interval
$
T_s=\frac{\pi}{\delta_{\max}}
$
and ensure that $M \ge N$.
Or, equivalently, suppose we take
\begin{equation}
M\geq \max\left( \frac{2(\log\lfloor N/2 \rfloor+1.01)}{\epsilon}\frac{\delta_{\max}}{\delta_{\min}}+1 ,\; N \right)
\label{eq:Muniform}
\end{equation}
total samples with the sampling interval
$
T_s=\frac{\pi}{\delta_{\max}}.
$
Then, the mode shape estimates $[\hat{\Psi}]$ obtained via Algorithm~\ref{alg:algorithm} satisfy
\begin{equation}
\|\{\psi_n\}-\{\hat{\psi}_n\}\|_2\leq \min\left\{\sqrt{2}, ~ \frac{\epsilon\sqrt{1+\epsilon}}{\sqrt{1-\epsilon}} \cdot \mathrm{sep}_n(\epsilon) \right\},
\label{eqn:resultuniform}
\end{equation}
where
\begin{equation*}
\mathrm{sep}_n(\epsilon) = \max_{l\neq n}\frac{\sqrt{2}|A_l||A_n|}{\displaystyle\min_{c\in[-1,1]}\{||A_l|^2-|A_n|^2(1+c\epsilon) |\}  }.
\end{equation*}
\end{theorem}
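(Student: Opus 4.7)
The plan is to exploit the factorization $[V]=[\Psi][\Gamma][S]$ from~\eqref{eqn:svdofv} and reduce the problem to a perturbation analysis of the Hermitian Gram matrix $[S][S]^*$. If $[S][S]^*$ were exactly $I_N$, then~\eqref{eqn:svdofv} would already be an SVD (up to truncation) and the left singular vectors would coincide with the columns of $[\Psi]$. My first goal is therefore to show that the uniform sampling scheme forces $\|[S][S]^*-I_N\|$ to be small, and my second is to translate this into a bound on $\|\{\psi_n\}-\{\hat\psi_n\}\|_2$ via an eigenvector perturbation argument.

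For the first goal, the $(l,n)$ entry of $[S][S]^*$ equals $\tfrac{1}{M}\sum_{m=0}^{M-1}e^{i(\omega_l-\omega_n)mT_s}$, a geometric sum; the diagonal equals $1$ and, for $l\neq n$, the Dirichlet-kernel estimate gives $|([S][S]^*)_{l,n}|\leq \tfrac{1}{M|\sin((\omega_l-\omega_n)T_s/2)|}$. The choice $T_s=\pi/\delta_{\max}$ places the argument of the sine in $[-\pi/2,\pi/2]$, where $|\sin\theta|\geq 2|\theta|/\pi$ applies and yields $|([S][S]^*)_{l,n}|\leq \pi/(t_{\max}\,|\omega_l-\omega_n|)$ (using $MT_s\geq t_{\max}$). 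Because consecutive modal frequencies are separated by at least $\delta_{\min}$, the gap satisfies $|\omega_l-\omega_n|\geq |l-n|\delta_{\min}$, and the $l$th row sum of off-diagonals is at most $(\pi/(t_{\max}\delta_{\min}))(H_{l-1}+H_{N-l})\leq 2\pi\,H_{\lfloor N/2\rfloor}/(t_{\max}\delta_{\min})$, where $H_k$ is the $k$th harmonic sum. The standard estimate $H_k\leq \log k+1.01$ combined with~\eqref{eq:tmaxuniform} bounds this row sum by $\epsilon$, and Gershgorin's theorem applied to the Hermitian $[S][S]^*$ forces $\|[S][S]^*-I_N\|\leq \epsilon$.

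For the second goal, set $E=[S][S]^*-I_N$ and write $[V][V]^*=[\Psi]\bigl([\Gamma][\Gamma]^*+[\Gamma]E[\Gamma]^*\bigr)[\Psi]^*$. Because $[\Psi]$ is unitary, the left singular vectors of $[V]$ are $[\Psi]$ applied to the eigenvectors of $[\Gamma][\Gamma]^*+[\Gamma]E[\Gamma]^*$, whose unperturbed eigenpairs are $(M|A_n|^2, e_n)$. The congruence sandwich $(1-\epsilon)[\Gamma][\Gamma]^*\preceq [\Gamma][S][S]^*[\Gamma]^*\preceq(1+\epsilon)[\Gamma][\Gamma]^*$ combined with Weyl's inequality localizes each perturbed eigenvalue $\hat\lambda_n$ in $[M|A_n|^2(1-\epsilon),\,M|A_n|^2(1+\epsilon)]$. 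Writing $\hat f_n=[\Psi]^*\{\hat\psi_n\}$ and taking inner products with $e_l$ in the eigenvalue equation yields $(M|A_l|^2-\hat\lambda_n)\langle e_l,\hat f_n\rangle=-\langle e_l,[\Gamma]E[\Gamma]^*\hat f_n\rangle$. Bounding the right-hand side by roughly $M|A_l||A_n|\epsilon$ (using $|E_{l,n}|\leq\|E\|\leq\epsilon$ on the dominant coordinate of $\hat f_n$, then bootstrapping the small off-diagonal contributions) and the denominator by $M\min_{c\in[-1,1]}\bigl||A_l|^2-|A_n|^2(1+c\epsilon)\bigr|$ produces a per-$l$ bound on $|\langle e_l,\hat f_n\rangle|$ that, after sign-alignment and normalization, assembles into the $\mathrm{sep}_n(\epsilon)$ form. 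The trivial bound $\|\{\psi_n\}-\{\hat\psi_n\}\|_2\leq\sqrt 2$ handles the regime where the refined estimate is vacuous.

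The chief technical difficulty is the sharpness of the perturbation step. The $\mathrm{sep}_n(\epsilon)$ quantity contains a maximum over $l$ together with the refined denominator $\min_{c\in[-1,1]}||A_l|^2-|A_n|^2(1+c\epsilon)|$, rather than the cruder gap $\min_{l\neq n}|\hat\lambda_l-\lambda_n|$ that a direct Davis--Kahan invocation supplies. I expect this requires either a pairwise two-dimensional invariant-subspace argument that isolates the single worst competing mode $l$, or a self-consistent bootstrap that resolves the implicit dependence of $\hat f_n$ on itself inside $\langle e_l,[\Gamma]E[\Gamma]^*\hat f_n\rangle$. The $\sqrt{1\pm\epsilon}$ factors in the final bound almost certainly arise from converting between the Euclidean norm on $\hat f_n$ and its norm in the $[\Gamma][\Gamma]^*$-weighted inner product in which the perturbed eigenvectors are orthogonal.
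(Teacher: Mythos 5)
Your first stage---showing that the uniform sampling conditions force $\|[S][S]^*-I_N\|_2\le\epsilon$---is essentially identical to the paper's argument: the off-diagonal entries of $[S][S]^*$ are Dirichlet-kernel values, the choice $T_s=\pi/\delta_{\max}$ puts the sine argument in $[-\pi/2,\pi/2]$ so that $|\sin\theta|\ge 2|\theta|/\pi$ applies, the sorted frequencies give $|\omega_l-\omega_n|\ge|l-n|\delta_{\min}$, and Gershgorin plus the harmonic-number bound $H_k<\log k+1.01$ delivers the row-sum bound $\epsilon$. This part is correct and matches the paper step for step.

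The gap is in your second stage. The paper does not attempt a from-scratch eigenvector perturbation argument: it forms the scaled perturbation $\eta:=\|[H]^{-1/2}[\delta H][H]^{-1/2}\|_2$ with $[H]=[\Psi][\Gamma][\Gamma]^*[\Psi]^*$ and $[\delta H]=[\Psi][\Gamma][\Delta][\Gamma]^*[\Psi]^*$, shows by the substitution $\{y\}=[\Gamma]^*([\Gamma][\Gamma]^*)^{-1/2}\{x\}$ that $\eta=\|[\Delta]\|_2$ exactly (the $[\Gamma]$ factors cancel), and then invokes a known \emph{relative} perturbation theorem (Mathias, 1998, Theorem~1, together with a companion lemma guaranteeing the perturbed eigenvalues lie in $[M|A_n|^2(1-\epsilon),M|A_n|^2(1+\epsilon)]$); the constants $\frac{\epsilon\sqrt{1+\epsilon}}{\sqrt{1-\epsilon}}$ and the relative gap $\min_{c\in[-1,1]}\bigl||A_l|^2-|A_n|^2(1+c\epsilon)\bigr|$ in $\mathrm{sep}_n(\epsilon)$ come directly from that theorem. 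Your replacement---the identity $(M|A_l|^2-\hat\lambda_n)\langle e_l,\hat f_n\rangle=-\langle e_l,[\Gamma]E[\Gamma]^*\hat f_n\rangle$ followed by bounding the right side ``by roughly $M|A_l||A_n|\epsilon$''---does not close as written: $\langle e_l,[\Gamma]E[\Gamma]^*\hat f_n\rangle=MA_l\sum_j E_{l,j}\bar{A}_j(\hat f_n)_j$ involves the full vector $[\Gamma]^*\hat f_n$, which is not supported on (nor controlled by) the $n$th coordinate alone, so replacing it by $|A_n|$ requires precisely the self-consistent bootstrap you defer, and the $\sqrt{1\pm\epsilon}$ normalization factors are asserted rather than derived. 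You correctly diagnose this as the chief difficulty, but diagnosing it is not the same as resolving it; to complete the proof you should either carry out that bootstrap rigorously or, as the paper does, reduce to $\eta=\|[\Delta]\|_2$ and cite a Mathias--Veseli\'{c}-type relative perturbation bound, which hands you \eqref{eqn:resultuniform} in exactly the stated form.
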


In \eqref{eqn:resultuniform} we see that the error in the $n$th estimated mode shape vector mainly depends on $\epsilon$ and what is essentially the minimum separation between $|A_n|$ and all other $|A_l|$.
The variable $\epsilon$ controls how close the rows of $[S]$ are to being orthogonal; a small choice of $\epsilon$ implies more orthogonal rows and leads to a better preservation of the mode shapes but requires more samples.
Furthermore, the bigger the separation between $|A_n|$ and all other $|A_l|$, the better our estimate.
Note that the parameters $|A_n|$ are dependent on the underlying structure and thus are out of our control.
In order to guarantee a small error in the $n$th mode shape when $|A_n|$ is close to some other $|A_l|$, one would need to make $\epsilon$ smaller.

Turning our attention to the sampling parameters, the above theorem essentially tells us that we need to sample for a time span that is inversely proportional to the minimum spacing between the modal frequencies.
Thus, the smaller the minimum spacing between the modal frequencies, the longer we must sample to get an accurate estimate.
Also, since $T_s=\frac{\pi}{\delta_{\max}}$, the maximum spacing between modal frequencies determines how fast we need to sample.
Comparing this sampling interval to the Nyquist sampling interval which would be $T_{0}=\frac{\pi}{2\max_{n}\omega_n}$, it is interesting to note that $T_s> T_0$.
This suggests that for the purpose of mode shape extraction, we can potentially sample at a rate lower than the Nyquist rate and still accurately recover the mode shapes.
However, it is important to note that in order to sample with $T_s$ we must know in advance the maximum separation between modal frequencies.
In scenarios where $\delta_{\max}$ is unknown it would be more reasonable to sample at a sufficiently small interval $(T_s \approx T_0)$ to ensure the sampling conditions are satisfied.
Finally, note that the condition on $M$ is fairly satisfactory in its logarithmic dependence on $N$ and its linear dependence on $\frac{1}{\epsilon}$ (this assumes the left hand term dominates in \eqref{eq:Muniform}).
However, it also scales with the ratio $\frac{\delta_{\max}}{\delta_{\min}}$, which depends on the structure.
For some structures this ratio could in fact be large, and in the absence of additional information about the structure, one may need to assume this ratio is large.
This motivates our second sampling strategy, which appears below.

\subsubsection{Random time sampling}
\label{subsec:randomsampling}

Let us now suppose that we sample $\{v(t)\}$ at $M$ time points $t_1,\dots,t_{M}$ chosen uniformly at random in the time interval $[0,t_{\max}]$.
We can establish the following theorem.
\begin{theorem}
Let $[V]=[\Psi][\Gamma][S]$ be as given in~\eqref{eqn:svdofv} describing an $N$-degree-of-freedom system sampled according to the random sampling scheme described above.
Suppose we sample for a total time span of at least
\begin{equation}
t_{\max}\geq\frac{40(\log\lfloor N/2\rfloor+1.01)}{\epsilon\delta_{\min}},
\label{eq:tmaxrandom}
\end{equation}
and within this time span suppose we take a number of measurements satisfying
\begin{equation}
M > \max\left( \frac{\log(N) + \log(2/\tau)}{ \min(D_1,D_2) },\; N \right)
\label{eq:Mrandom}
\end{equation}
where
\begin{align*}
D_1 &= D((1+\epsilon)/N ||(1+\epsilon/10) /N),\\
D_2 &= D((1-\epsilon)/N ||(1-\epsilon/10) /N).
\end{align*}
Then with probability at least $1-\tau$ all of the mode shape estimates $[\hat{\Psi}]$ obtained via Algorithm~\ref{alg:algorithm} will satisfy \eqref{eqn:resultuniform}.
\label{thm:randomsamplingFinal}
\end{theorem}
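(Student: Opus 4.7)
The plan is to reduce Theorem~\ref{thm:randomsamplingFinal} to the mode-shape perturbation argument already carried out for Theorem~\ref{thm:uniformsampling}. In both theorems, Algorithm~\ref{alg:algorithm} outputs the left singular vectors of $[V]=[\Psi][\Gamma][S]$, so as long as one can establish $\|[S][S]^*-I\|\leq\epsilon$ (now with probability at least $1-\tau$ rather than deterministically), the same SVD perturbation argument that produced \eqref{eqn:resultuniform} in the uniform case applies verbatim, and the ``$\mathrm{sep}_n(\epsilon)$'' structure of the bound carries over. All of the new work therefore lies in showing that random sampling yields this near-isometry property under the hypotheses \eqref{eq:tmaxrandom} and \eqref{eq:Mrandom}. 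I would attack $\|[S][S]^*-I\|$ via the triangle-inequality split $[S][S]^*-I=(\mathbb{E}[SS^*]-I)+([S][S]^*-\mathbb{E}[SS^*])$ and handle the two pieces separately.

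For the deterministic (expectation) piece, each off-diagonal entry satisfies $\mathbb{E}[e^{i(\omega_l-\omega_n)t_m}]=(e^{i(\omega_l-\omega_n)t_{\max}}-1)/(i(\omega_l-\omega_n)t_{\max})$, which has modulus at most $2/(|\omega_l-\omega_n|t_{\max})$. After ordering the modal frequencies so that $|\omega_l-\omega_n|\geq|l-n|\delta_{\min}$, a Gershgorin row-sum bound yields
\begin{equation*}
\|\mathbb{E}[SS^*]-I\|\;\leq\;\frac{4H_{\lfloor N/2\rfloor}}{\delta_{\min} t_{\max}},
\end{equation*}
where $H_k$ is the $k$th harmonic number. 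The constants $40$ and $\log\lfloor N/2\rfloor+1.01$ in \eqref{eq:tmaxrandom} are calibrated precisely so that $H_{\lfloor N/2\rfloor}$ fits inside a budget of $\epsilon/10$ on this deterministic deviation.

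For the probabilistic piece, write $[S][S]^*=\tfrac{1}{M}\sum_{m=1}^{M}Y_m$ with rank-one PSD summands $Y_m=\{\tilde s_m\}\{\tilde s_m\}^*$, where $\{\tilde s_m\}$ is the $N$-vector with entries $e^{i\omega_k t_m}$ (so $\|Y_m\|=N$). Rescaling so that each $Y_m/N$ has spectral norm at most $1$ and applying the matrix Chernoff bound in its Kullback--Leibler form (e.g.\ Tropp's user-friendly tail inequalities for sums of random Hermitian matrices) gives, for the largest eigenvalue,
\begin{equation*}
\Pr\!\left[\lambda_{\max}\!\left(\tfrac{1}{MN}\textstyle\sum_m Y_m\right)\geq \tfrac{1+\epsilon}{N}\right] \leq N\exp(-M D_1),
\end{equation*}
and symmetrically an exponent $D_2$ for the smallest eigenvalue. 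The thresholds $(1\pm\epsilon)/N$ encode the target deviation, while the baselines $(1\pm\epsilon/10)/N$ arise from the expectation budget established in the previous step---which is exactly why $D_1$ and $D_2$ appear with the specific arguments stated in the theorem. A union bound over the two tails absorbs the factor $N$ in \eqref{eq:Mrandom}, and the $\log(2/\tau)$ term drives the total failure probability below $\tau$.

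Combining the two pieces by the triangle inequality gives $\|[S][S]^*-I\|\leq\epsilon$ with probability at least $1-\tau$, at which point the SVD perturbation bound from the proof of Theorem~\ref{thm:uniformsampling} delivers \eqref{eqn:resultuniform} for every $n$. The main obstacle is the constant bookkeeping: the slack between the $\epsilon/10$ budget (expectation) and the $\epsilon$ target (total) is exactly what keeps the Chernoff exponents $D((1\pm\epsilon)/N\,\|\,(1\pm\epsilon/10)/N)$ strictly positive, and this balance must be maintained throughout. Without the matrix-Chernoff step one would be tempted to control each entry of $[S][S]^*$ individually via scalar Hoeffding, but that approach incurs an $N^{2}$ union-bound penalty that would spoil the sample complexity's mild dependence on $N$.
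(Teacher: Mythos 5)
Your proposal is correct and follows essentially the same route as the paper: bound the eigenvalues of $\mathbb{E}[S][S]^{*}$ via Gershgorin and the harmonic-number estimate to within $\epsilon/10$ of $1$ (this is where the constant $40$ in \eqref{eq:tmaxrandom} comes from), then apply Tropp's matrix Chernoff bound in Kullback--Leibler form to the rank-one summands $\{S_m\}\{S_m\}^{*}$ with thresholds $(1\pm\epsilon)/N$ and baselines $(1\pm\epsilon/10)/N$, union-bound the two tails, and feed $\|[S][S]^{*}-I\|_2\leq\epsilon$ into the same SVD perturbation argument used for Theorem~\ref{thm:uniformsampling}. Your ``triangle-inequality split'' framing is only a presentational difference, since the Chernoff bound already absorbs the expectation deviation through its $\tilde{\mu}$ parameters; the substance of the argument matches the paper's.
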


This result for random time sampling looks somewhat similar to Theorem~\ref{thm:uniformsampling} for uniform time sampling.
The recovery guarantee is the same and the required time span differs only by a constant.
A critical difference, however, is that the requisite number of samples $M$ no longer depends on the ratio $\frac{\delta_{\max}}{\delta_{\min}}$.
However, when $N$ is large and $\epsilon$ is small, the denominator in~\eqref{eq:Mrandom} will scale like $\frac{\epsilon^2}{N}$, and so the requisite number of samples will scale like $\frac{N\log(N)}{\epsilon^2}$.
This represents a stronger dependence on $N$ compared to what appears in~\eqref{eq:Muniform}, but only by a logarithmic factor (because the right hand term in~\eqref{eq:Muniform} scales like $N$).
It also represents a stronger dependence on $\epsilon$ compared to what appears in~\eqref{eq:Muniform}.
Ultimately, we see that in some cases random time sampling could provide a significant reduction in the number of samples for systems where $\frac{\delta_{\max}}{\delta_{\min}}$ is large or unknown.
For a given problem, the better choice between uniform and random time sampling may depend on the particular circumstances and the parameters of the system under study.

\subsection{Uniform sampling followed by random matrix multiplication}

The last measurement scheme that we consider involves taking uniform time samples and compressing these via multiplication by a random matrix.
More specifically, let us form $[V]$ following the uniform sampling scheme as discussed in Section~\ref{subsec:uniformsampling} with $t_{\max}$ and $T_s$ as given in Theorem~\ref{thm:uniformsampling}.
Subsequently, we construct a random $M\times M'$ matrix $[\Phi]$ and compute the $N \times M'$ matrix $[Y]=[V][\Phi]$ of compressed measurements.
We are specifically interested in cases where $M' < M$, i.e., when $[Y]$ has fewer columns than $[V]$.
We also note that $[\Phi]$ can be applied individually to each row of $[V]$ and the resulting measurements can be concatenated to form $[Y]$.
This means that this CS measurement scheme can be performed sensor-by-sensor in a SHM system.

To state our results, we write the truncated SVD of $[Y]$ analogously to that of $[V]$ as $[Y]=[\tilde{\Psi}][\tilde{\Gamma}][\tilde{S}]$. The matrices $[\tilde{\Psi}]=[\{\tilde{\psi}_1\},\dots,\{\tilde{\psi}_N\}]$, $[\tilde{\Gamma}]$ and $[\tilde{S}]$ will be $N\times N$, $N\times M'$, and $M'\times M'$, respectively.
We also require the following definition.
\begin{definition}
\label{def:jl}
A $Q \times m$ random matrix $[\Phi]$ is said to satisfy the {\em distributional JL property} if for any fixed $\{x\} \in \complex^Q$ and any $0 < \epsilon < 1$,
$$
\text{Pr}\left[ \left| \| [\Phi]^* \{x\} \|_2^2 - \| \{x\} \|_2^2 \right| > \epsilon \| \{x\} \|_2^2 \right] \le 4 e^{-m f(\epsilon)},
$$
where $f(\epsilon) > 0$ is a constant depending only on $\epsilon$.
\end{definition}
For most random matrices satisfying the distributional JL property, the functional dependence on $\epsilon$, $f(\epsilon)$, is quadratic in $\epsilon$ as $\epsilon \rightarrow 0$.
There are a variety of random matrix constructions known to possess the distributional JL property. Notably, random matrices populated with independent and identically distributed (i.i.d.)\ subgaussian entries will possess this property~\cite{Daven_Concentration}. Subgaussian distributions include suitably scaled Gaussian and $\pm 1$ Bernoulli random variables.

We are now ready to state our next theorem.
\begin{theorem}
\label{thm:uniformandrandom}
Let $[V]=[\Psi][\Gamma][S]$ be as given in~\eqref{eqn:svdofv} describing an $N$-degree-of-freedom system sampled according to the uniform sampling scheme with $T_s$ and $t_{\max}$ as required by Theorem~\ref{thm:uniformsampling}.
Let $[\Phi]$ represent an $M\times M'$ random matrix that satisfies the distributional JL property with
$$
M'\geq\frac{2k\log(42/\epsilon')+\log(4/\delta)}{f(\epsilon'/\sqrt{2})},
$$
where $k\leq N$ represents the rank of $[V]$ and $\epsilon'$ represents the distortion factor of $[\Phi]$.
Let $[Y]=[V][\Phi]$ and let $[\tilde{\Gamma}]$ and $[\tilde{\Psi}]$ denote the estimated singular values and left singular vectors of $[Y]$ returned by Algorithm~\ref{alg:algorithm} when we provide $[Y]$ in place of $[V]$ as the input matrix.
Then, with probability exceeding $1-\delta$ the mode shape estimates $[\tilde{\Psi}]$ satisfy the following bound:
\begin{align}
& \|\{\psi_n\}-\{\tilde{\psi}_n\}\|_2\leq \nonumber \\
& ~~~ \min\left\{\sqrt{2}, ~ \frac{\epsilon\sqrt{1+\epsilon}}{\sqrt{1-\epsilon}} \cdot \mathrm{sep}_n(\epsilon) + \frac{\epsilon'\sqrt{1+\epsilon'}}{\sqrt{1-\epsilon'}} \cdot \mathrm{sep}^{'}_n(\epsilon') \right\},
\label{eqn:resultuniformandrandom}
\end{align}
where
$$
\mathrm{sep}^{'}_n(\epsilon') = \max_{l\neq n}\frac{\sqrt{2}\sigma_l\sigma_n}{\displaystyle\min_{c\in[-1,1]}\{|\sigma_l^2-\sigma_n^2(1+c\epsilon') |\}  }
$$
and $\sigma_n$ are the singular values of $[V]$.
\end{theorem}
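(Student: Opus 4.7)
The natural plan is to split the error via the triangle inequality through the left singular vectors $\{\hat{\psi}_n\}$ of the uncompressed matrix $[V]$:
\begin{equation*}
\|\{\psi_n\}-\{\tilde{\psi}_n\}\|_2 \;\leq\; \|\{\psi_n\}-\{\hat{\psi}_n\}\|_2 + \|\{\hat{\psi}_n\}-\{\tilde{\psi}_n\}\|_2 .
\end{equation*}
The first term is immediately controlled by Theorem~\ref{thm:uniformsampling}, producing the $\frac{\epsilon\sqrt{1+\epsilon}}{\sqrt{1-\epsilon}}\cdot\mathrm{sep}_n(\epsilon)$ contribution in \eqref{eqn:resultuniformandrandom}. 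The bulk of the work is the second term: I must show that right-multiplication by $[\Phi]$ perturbs the left singular vectors by no more than $\frac{\epsilon'\sqrt{1+\epsilon'}}{\sqrt{1-\epsilon'}}\cdot\mathrm{sep}'_n(\epsilon')$.

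For this I write the truncated SVD $[V]=[\hat{\Psi}][\hat{\Gamma}][\hat{S}]$, so that $[Y]=[\hat{\Psi}][\hat{\Gamma}]\bigl([\hat{S}][\Phi]\bigr)$. Because the rows of $[\hat{S}]$ are exactly orthonormal and lie in a $k$-dimensional subspace (with $k=\mathrm{rank}([V])\le N$), I can apply the standard covering-net lifting of the distributional JL property from a single vector to a $k$-dimensional subspace. The stated measurement count
$M'\geq \bigl(2k\log(42/\epsilon')+\log(4/\delta)\bigr)/f(\epsilon'/\sqrt{2})$ is precisely the bound obtained by this covering argument (of Baraniuk--Davenport--DeVore--Wakin type), and it guarantees that, with probability at least $1-\delta$, $[\Phi]^*$ preserves the squared norm of every vector in the row-space of $[\hat S]$ to within a factor $1\pm\epsilon'$. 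Equivalently, the matrix $[\hat S][\Phi]$ has rows whose Gram matrix is within $\epsilon'$ of the identity in operator norm.

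With this near-orthogonality of the rows of $[\hat S][\Phi]$ in hand, the decomposition $[Y]=[\hat{\Psi}][\hat{\Gamma}]\bigl([\hat{S}][\Phi]\bigr)$ is structurally identical to the decomposition $[V]=[\Psi][\Gamma][S]$ used in the proof of Theorem~\ref{thm:uniformsampling}: an orthonormal left factor, a diagonal middle factor, and a right factor with approximately orthonormal rows. Therefore I can replay the same perturbation analysis that underlies Theorem~\ref{thm:uniformsampling}, but now with the diagonal amplitudes being the singular values $\sigma_n$ of $[V]$ in place of the modal amplitudes $|A_n|$, and with tolerance parameter $\epsilon'$ in place of $\epsilon$. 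This yields exactly
\begin{equation*}
\|\{\hat{\psi}_n\}-\{\tilde{\psi}_n\}\|_2 \;\leq\; \frac{\epsilon'\sqrt{1+\epsilon'}}{\sqrt{1-\epsilon'}}\cdot\mathrm{sep}'_n(\epsilon'),
\end{equation*}
and combining with the first triangle-inequality piece gives the claimed bound; the $\min\{\sqrt{2},\cdot\}$ is trivial since any two unit vectors differ by at most $\sqrt{2}$.

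The main obstacle is the middle step: promoting the pointwise distributional JL property to a uniform statement over the entire $k$-dimensional row-space of $[\hat S]$, with sharp enough constants to match the stated $M'$ bound (this is where the $42/\epsilon'$ covering constant and the $\epsilon'/\sqrt{2}$ slack in $f$ enter). After that, the bookkeeping is a clean re-run of the Theorem~\ref{thm:uniformsampling} argument with the substitution $|A_n|\mapsto\sigma_n$, which is exactly what converts $\mathrm{sep}_n$ into $\mathrm{sep}'_n$ in the final bound.
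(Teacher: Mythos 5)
Your proposal is correct and follows essentially the same route as the paper: the paper's proof is exactly the triangle inequality through the left singular vectors of $[V]$, with the first term handled by Theorem~\ref{thm:uniformsampling} and the second term handled by citing Theorem~1 of the sketched-SVD reference~\cite{sketchedgpw12}. The subspace-JL covering argument and the re-run of the perturbation analysis with $|A_n|\mapsto\sigma_n$ that you sketch for the second term is precisely the content of that cited theorem, so you have simply unpacked the citation rather than taken a different path.
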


The error bound \eqref{eqn:resultuniformandrandom} looks similar to those appearing in Theorems~\ref{thm:uniformsampling} and~\ref{thm:randomsamplingFinal} except that instead of having one term we now have two terms.
The first term is essentially the error due to the uniform time sampling matrix $[V]$ and the second term is the error due to the multiplication by a random matrix $[\Phi]$.
The required number of columns in the random matrix $[\Phi]$ is dependent on the rank $k$ of $[V]$.
The higher the rank of $[V]$, the more measurements we need; in the worse case one could assume $k = N$.
One could easily envision a scenario where this measurement scheme could be useful.
For example, suppose we are dealing with a structure that has high $\delta_{\max}$ and small $\delta_{\min}$ (or suppose that we do not know these quantities and so we conservatively suppose they are large and small, respectively).
This means that we need to take a large number $M$ of uniform samples using a small sampling interval $T_s$ over a long time duration $t_{\max}$.
In such scenarios, one could choose to post-process the signals to reduce the number of measurements by multiplying each sample vector with $[\Phi]$.
Another similar scenario is when we have a conventional uniform sensor over which we do not have control over the sampling interval.
Again, we may use $[\Phi]$ as a way to reduce the number of measurements.
In Section~\ref{csshm:sec:subsec:synthetic}, we demonstrate the usefulness of this sampling scheme with a synthetic example.

\section{Experimental Results}
\label{csshm:sec:shmexp}

\subsection{Experiments with synthetic data}
\label{csshm:sec:subsec:synthetic}

\begin{figure*}[t]
\centerline{
\begin{minipage}[t]{0.33\linewidth}
  \centering
  \centerline{\includegraphics[width=7cm]{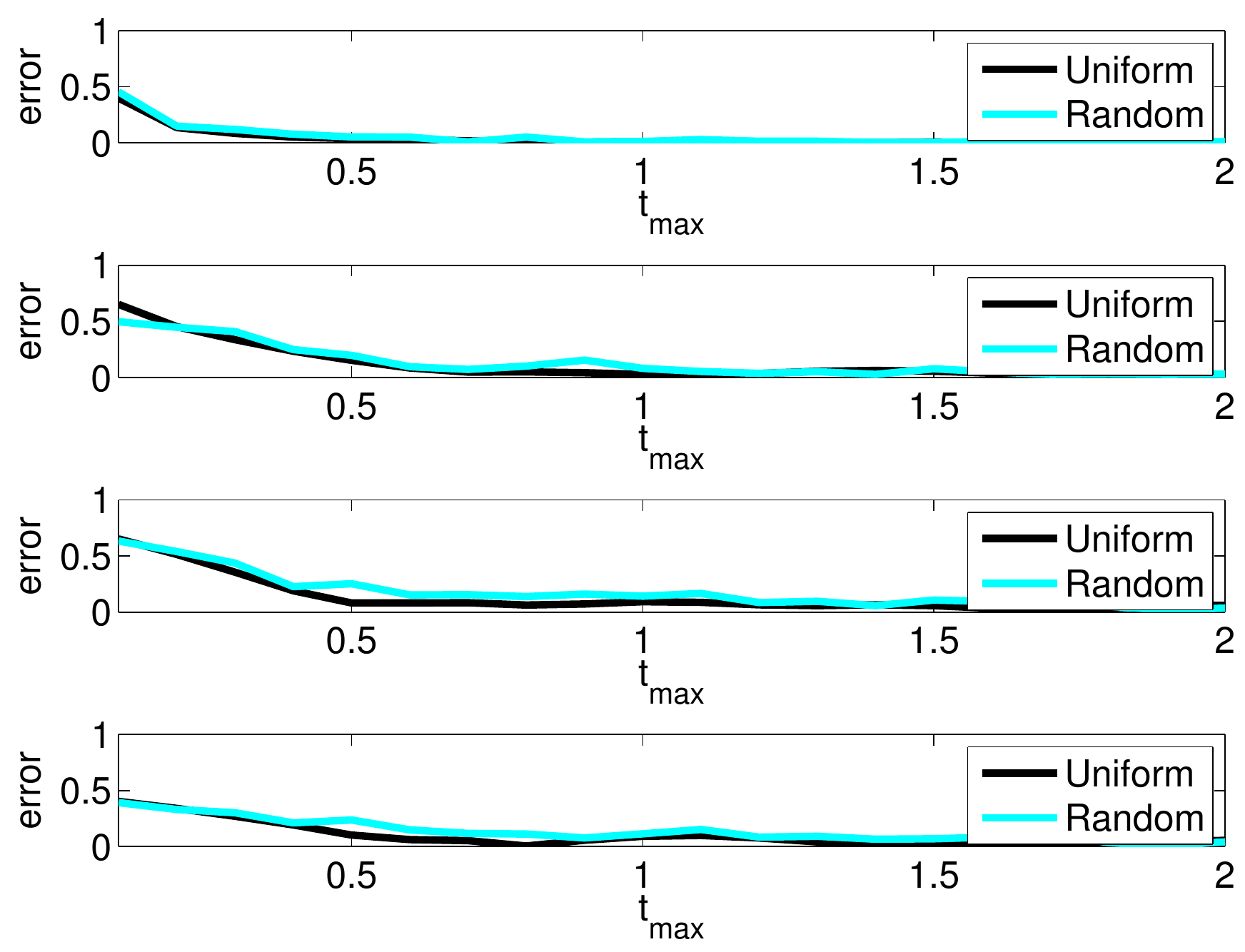}}
  \centerline{\small{(a)}}\medskip
\end{minipage}
\hfil
\begin{minipage}[t]{0.33\linewidth}
  \centering
  \centerline{\includegraphics[width=7cm]{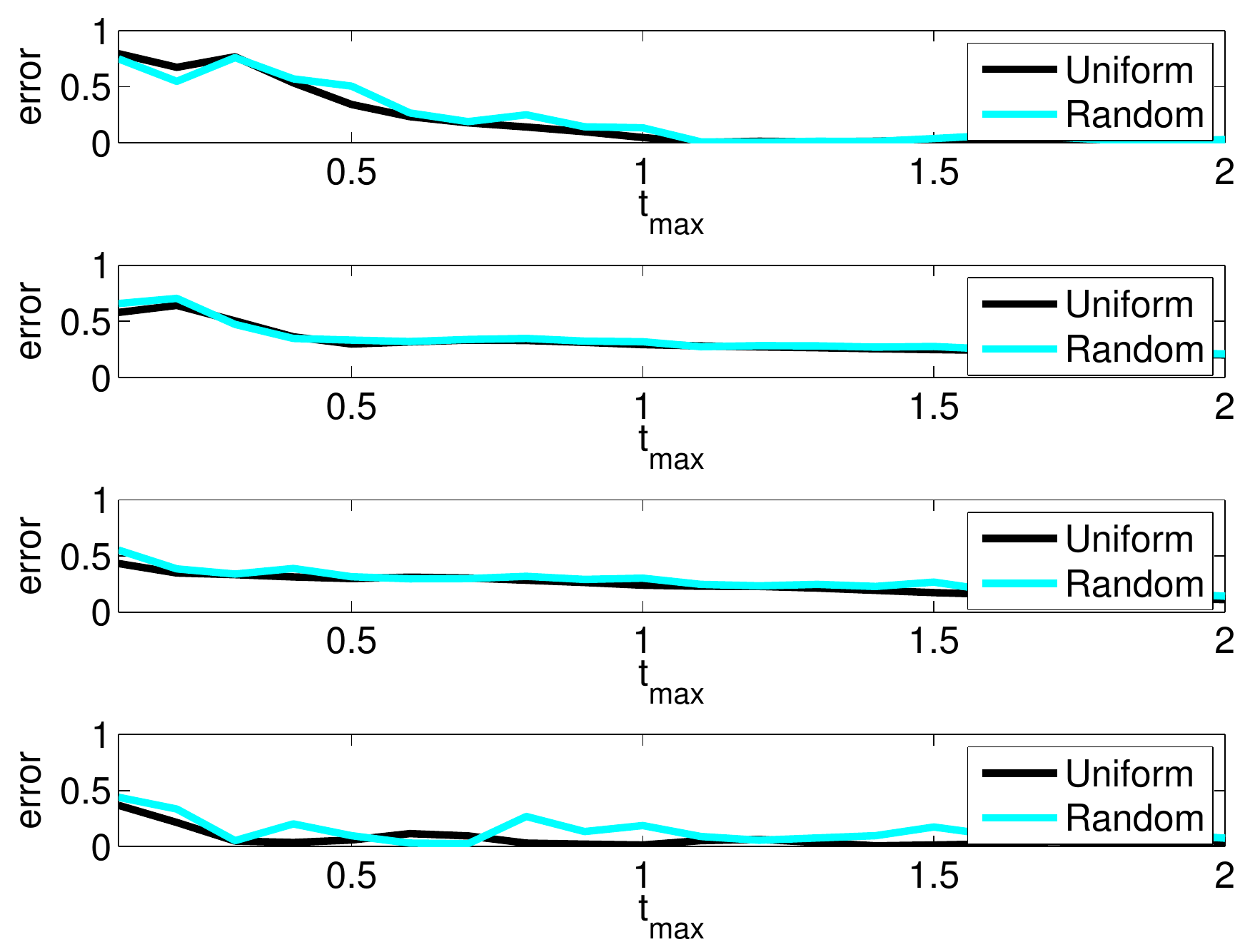}}
  \centerline{\small{(b)}}\medskip
\end{minipage}}
\caption{\small\sl {$\ell_2$ error of mode shape estimates using uniform (black) and random (blue) time sampling schemes.
The four subfigures correspond to the four mode shapes, and the results are plotted as a function of the total sampling time $t_{\max}$.
(a) Results using an inital set of modal frequencies.
(b) Results using a set of modal frequencies with smaller minimum separation $\delta_{\min}$.
}%
}
\label{fig:usrs1}
\end{figure*}

We begin by demonstrating the effectiveness of the various methods using an idealized synthetic dataset.
The system that we consider is a 4-degree-of-freedom structure with no damping and under free vibration.
For reproducibility the modal vectors are the eigenvectors of the following symmetric matrix:
$$
[Z]=\left[
 \begin{array}{cccc}
 2 & -1 & 0 & 0 \\
0 & 2 & -1 & 0 \\
0 & -1&  2 & -1  \\
0 & 0 & -1 &   2 \\
 \end{array}
 \right].
$$
We use these mode shapes throughout this subsection.
We also use the following $[\Gamma]$ matrix throughout this subsection:
$$
[\Gamma]
=
\left[
 \begin{array}{cccc}
 1 & 0 & 0 & 0 \\
0 & 0.45 & 0 & 0 \\
0 & 0 &  0.15 & 0  \\
0 & 0 & 0 &  0.01 \\
 \end{array}
 \right].
$$
The modal superposition equation can be written as
$$
\{v(t)\}=
[\Psi]
 \underbrace{
\left[
 \begin{array}{cccc}
 1 & 0 & 0 & 0 \\
0 & 0.45 & 0 & 0 \\
0 & 0 &  0.15 & 0  \\
0 & 0 & 0 &  0.01 \\
 \end{array}
 \right]}_{[\Gamma]}
 \underbrace{
\left\{
\begin{array}{c}
e^{i\omega_1t}\\
e^{i\omega_2t}\\
e^{i\omega_3t}\\
e^{i\omega_4t}\\
\end{array}
\right\}}_{\{s(t)\}},
$$
where the modal frequencies $\omega_1,\dots,\omega_4$ remain to be chosen.

In our first experiment, we demonstrate the uniform and random time sampling methods by plotting the errors of each of the four estimated mode shapes obtained from the SVD of the sampled matrix $[V]$.
For both methods, we set $\omega_1=2.1\pi$, $\omega_2=4.28\pi$, $\omega_3=6.02\pi$, and $\omega_4=8.24\pi$ rad/s.
For the uniform time sampling scheme, we use a sampling interval of $T_s=0.1$s, which is just slightly faster than what our theorem prescribes.
Using this fixed rate, we collect samples over a total time span of duration $t_{\max}$, and we repeat the experiment for $t_{\max}=[0:T_s:2]$s (the value of $M$ therefore increases with $t_{\max}$).
For each value of $t_{\max}$ and for $n = 1,\dots,4$, we plot the $\ell_2$ error $\|\{\psi_n\}-\{\hat{\psi}_n\}\|_2$ between the ground truth mode shape vector $\{\psi_n\}$ and the corresponding estimated vector $\{\hat{\psi}_n\}$ produced using the SVD on the data matrix $[V]$.
The results are shown in the black curves in Figure~\ref{fig:usrs1}(a).
For the random time sampling scheme, we use the same values of $t_{\max}$ and sample $\{v(t)\}$ uniformly at random within the interval $[0,~ t_{\max}]$.
For each value of $t_{\max}$, the total number of samples $M$ we collect is chosen to equal the corresponding value of $M$ used for uniform time sampling above.
The errors of the mode shape estimates are shown in the blue curves in Figure~\ref{fig:usrs1}(a).
We see that overall, the performance of the two sampling schemes is comparable.
This is in agreement with Theorems~\ref{thm:uniformsampling} and~\ref{thm:randomsamplingFinal}, as they both suggest the same reconstruction guarantees given that we satisfy the sampling conditions.

Our second experiment highlights the role played by $\delta_{\min}$ (the minimum separation of the modal frequencies).
We consider two sets of modal frequencies. The first set is the same one used in our first experiment; for this set $\delta_{\min} = 1.74\pi$, and recall the results plotted in Figure~\ref{fig:usrs1}(a).
For our second set we use $\omega_1=2.1\pi$, $\omega_2=4.28\pi$, $\omega_3=4.6\pi$, and $\omega_4=8.24\pi$ rad/s.
This set has a smaller minimum separation between the modal frequencies; in particular, $\delta_{\min} = 0.32\pi$.
Based on Theorems~\ref{thm:uniformsampling} and~\ref{thm:randomsamplingFinal}, we anticipate the need to sample for a longer time span (larger $t_{\max}$) when $\delta_{\min}$ is smaller.
The mode shape errors using the second set of modal frequencies are plotted in Figure~\ref{fig:usrs1}(b).
Comparing to the results from the first set, we see that a longer sampling duration $t_{\max}$ is indeed needed to achieve comparable accuracy in estimating the mode shapes.

For uniform time sampling, when $T_s$ is fixed, then increasing $t_{\max}$ will automatically require more samples $M$ to be acquired.
For random time sampling, however, $t_{\max}$ and $M$ can be chosen independently of one another.
While in Figure~\ref{fig:usrs1} for each $t_{\max}$ we have always used the same value of $M$ for random sampling as we used for uniform sampling, this is not actually necessary.
In fact, our Theorem~\ref{thm:randomsamplingFinal} suggests that, for random sampling when $\delta_{\min}$ is small, we can increase $t_{\max}$ without increasing the number of samples $M$.
To demonstrate this, we conduct a third experiment, and for this we use the second (more closely spaced) set of modal frequencies above.
For several values of $M$, we collect $M$ samples using both uniform time sampling (for which $t_{\max}$ will be determined by $M$) and random time sampling.
For each value of $M$ with random time sampling, however, we choose $t_{\max}$ to be $2$ seconds longer than the value of $t_{\max}$ used for uniform sampling with the same value of $M$.
The results are shown in Figure~\ref{fig:usrs2}.
We see that simply by increasing $t_{\max}$ without affecting $M$, the random time sampling scheme can accommodate the decreased value of $\delta_{\min}$.

\begin{figure}[t]
\centerline{
\begin{minipage}[t]{0.33\linewidth}
  \centering
  \centerline{\includegraphics[width=7cm]{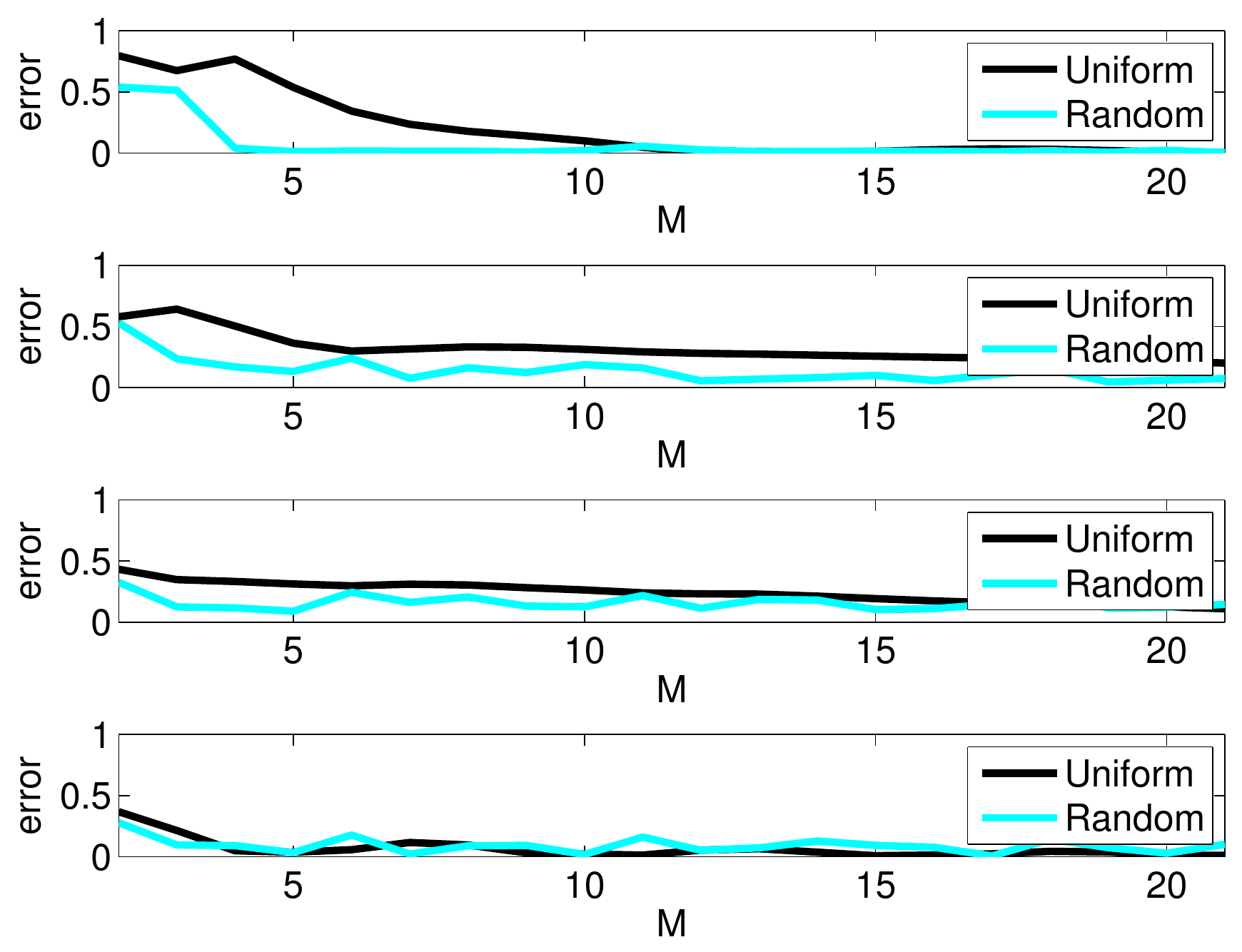}}
  \medskip
\end{minipage}
}
\caption{\small\sl {$\ell_2$ error of mode shape estimates using uniform (black) and random (blue) time sampling schemes.
The results are plotted as a function of the total number of samples $M$.
For each value of $M$, the $t_{\max}$ for random sampling is chosen to be $2$ seconds longer than the $t_{\max}$ for uniform sampling.
}%
}
\label{fig:usrs2}
\end{figure}

To motivate our fourth experiment, consider a scenario where we are limited in the number of samples we can transmit to the central data repository.
If our sensor is limited to collecting uniform time samples, 
then we may not be able to sample and transmit at a fast enough rate to avoid aliasing and accurately recover the mode shape vectors.
In a scenario such as this, one way to improve the performance would be to first sample uniformly at a high rate but then to multiply the high rate sample vector by a random compressive matrix so that the amount of transmitted data is reduced.
To illustrate this, we set the modal frequencies to be $\omega_1=10.6\pi$, $\omega_2=106.2\pi$, $\omega_3=200.8\pi$, and $\omega_4=360\pi$ rad/s, and we sample for a total time span of $t_{\max}=2$s.
For these modal frequencies, in order to avoid aliasing, the uniform sampling interval would need to satisfy $T_s\leq\frac{2\pi}{2\omega_4}=0.0028$s.
We first obtain a data matrix by sampling with $T_s=0.0629$s (a sub-Nyquist rate) over the total sampling time span $t_{\max} = 2$s.
This gives us in total 32 samples.
We then obtain a second data matrix by sampling with $T_s = 0.002$s (a super-Nyquist rate) and then multiplying the sample vector by a random Gaussian matrix $[\Phi]$ to produce $M'=32$ measurements.
We compute the left singular vectors for each of these data matrices to estimate the mode shapes.
For the first matrix, we see errors of $\|\{\psi_1\}-\{\hat{\psi}_1\}\|_2=0.39832$, $\|\{\psi_2\}-\{\hat{\psi}_2\}\|_2 = 0.71303$, $\|\{\psi_3\}-\{\hat{\psi}_3\}\|_2 = 0.58411$, and $\|\{\psi_4\}-\{\hat{\psi}_4\}\|_2 = 0.03080$.
For the second matrix, we see errors of $\|\{\psi_1\}-\{\tilde{\psi}_1\}\|_2=0.05722$, $\|\{\psi_2\}-\{\tilde{\psi}_2\}\|_2 = 0.08615$,
$\|\{\psi_3\}-\{\tilde{\psi}_3\}\|_2 = 0.03706$, and $\|\{\psi_4\}-\{\tilde{\psi}_4\}\|_2= 0.00351$.
These results illustrate the utility of random matrix multiplication for reducing the dimensionality of a uniform sample vector. 

As a final experiment on the synthetic data, we demonstrate a simple way to estimate the modal frequencies from the data matrix.
We consider a uniform sampling scenario and set the modal frequencies as $\omega_1=6.24\pi$, $\omega_2=20.50\pi$, $\omega_3=30.06\pi$, and $\omega_4=40.22\pi$ rad/s.
We sample with $T_s = 0.03$s, which is slightly faster than what Theorem~\ref{thm:uniformsampling} prescribes, and we set $t_{\max}=6.03$s.
After constructing the sampled data matrix $[V]$, we compute the SVD.
In order to estimate the modal frequencies we focus on the output matrix $[\hat{S}]$.
Referring to \eqref{eqn:svdofv}, we see that each row of $[\hat{S}]$ should approximately contain a complex exponential corresponding to one of the four modal frequencies.
One simple way to extract each frequency is to compute the Fourier transform for each row of $[\hat{S}]$ and to identify the frequency with maximum magnitude.
Figure~\ref{fig:usrs3} shows the magnitude plot of the FFT for each row of $[\hat{S}]$.
Extracting the peak from each row, we obtain frequency estimates of $\hat{\omega}_1=6.3018\pi$, $\hat{\omega}_2=20.5638\pi$, $\hat{\omega}_3=30.1824\pi$, $\hat{\omega}_4=40.4643\pi$ rad/s.
The accuracy of these estimates is naturally limited in that the frequency resolution will be inversely proportional to $t_{\max}$.
Finally, we note that when estimating modal frequencies, it is important that we avoid aliasing in $[S]$ in order to identify the correct modal frequencies.
This experiment represents a promising proof of concept that modal frequencies can be estimated from compressive measurements; we reserve more detailed analysis of this problem for a future paper.

\begin{figure}[t]
\centerline{
\begin{minipage}[t]{0.33\linewidth}
  \centering
  \centerline{\includegraphics[width=7cm]{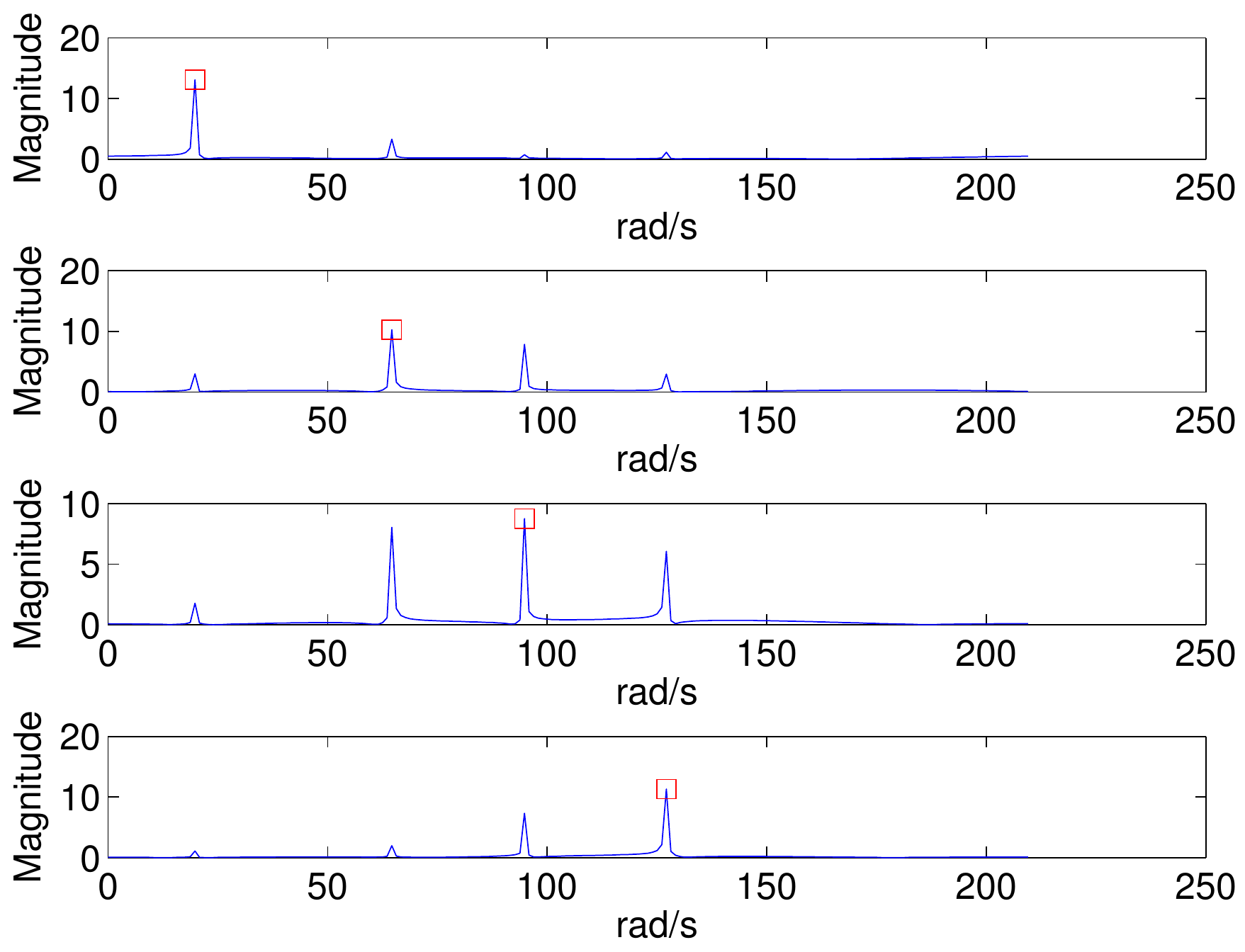}}
  \medskip
\end{minipage}
}
\caption{\small\sl {
FFT magnitude plots for the four rows of $[\hat{S}]$.
The top subplot corresponds to the first modal frequency, the second subplot corresponds to the second modal frequency, and so on.
The estimated modal frequencies were obtained by picking the largest peak in each plot (marked with red squares).
}%
}
\label{fig:usrs3}
\end{figure}

\subsection{Experiments with real data}

We conclude by presenting simulation results using vibration data collected from a bridge in Ypsilanti, MI.
On this bridge, there are $N=18$ wireless nodes, each of which is equipped with an accelerometer.
The relative layout of the sensors is shown in Figure~\ref{csshm:fig:bridge}.
Each sensor measures acceleration data and transmits it to the central node for analysis.
Because this data contains noise and is collected from a real bridge, which does have damping and is not necessarily in free vibration mode, we feel this represents an interesting test for our methods when the assumptions of our current theoretical results are violated.

\begin{figure}[t]
\begin{center}
\begin{tikzpicture}[yscale=1,xscale=1]
\draw [thick] (0,0)--(7,0);%
\draw [thick] (0,0)--(1,1.5);
\draw [thick] (1,1.5)--(8,1.5);%
\draw [thick] (7,0)--(8,1.5);
\draw [blue,fill] (1.6,1.35) circle (0.02);
\node [below] at (1.6,1.35) {\footnotesize  1};
\draw [blue,fill] (1.6+7/10*1,1.35) circle (0.02);
\node [below] at (1.6+7/10*1,1.35) {\footnotesize  2};
\draw [blue,fill] (1.6+7/10*2,1.35) circle (0.02);
\node [below] at (1.6+7/10*2,1.35) {\footnotesize  3};
\draw [blue,fill] (1.6+7/10*3,1.35) circle (0.02);
\node [below] at (1.6+7/10*3,1.35) {\footnotesize  4};
\draw [blue,fill] (1.6+7/10*4,1.35) circle (0.02);
\node [below] at (1.6+7/10*4,1.35) {\footnotesize  5};
\draw [blue,fill] (1.6+7/10*5,1.35) circle (0.02);
\node [below] at (1.6+7/10*5,1.35) {\footnotesize  6};
\draw [blue,fill] (1.6+7/10*6,1.35) circle (0.02);
\node [below] at (1.6+7/10*6,1.35) {\footnotesize  7};
\draw [blue,fill] (1.6+7/10*7,1.35) circle (0.02);
\node [below] at (1.6+7/10*7,1.35) {\footnotesize  8};
\draw [blue,fill] (1.6+7/10*8,1.35) circle (0.02);
\node [below] at (1.6+7/10*8,1.35) {\footnotesize  9};
\draw [blue,fill] (0.8,0.15) circle (0.02);
\node [above] at (0.8,0.15) {\footnotesize  10};
\draw [blue,fill] (0.8+7/10*1,0.15) circle (0.02);
\node [above] at (0.8+7/10*1,0.15) {\footnotesize  11};
\draw [blue,fill] (0.8+7/10*2,0.15) circle (0.02);
\node [above] at (0.8+7/10*2,0.15) {\footnotesize  12};
\draw [blue,fill] (0.8+7/10*3,0.15) circle (0.02);
\node [above] at (0.8+7/10*3,0.15) {\footnotesize  13};
\draw [blue,fill] (0.8+7/10*4,0.15) circle (0.02);
\node [above] at (0.8+7/10*4,0.15) {\footnotesize  14};
\draw [blue,fill] (0.8+7/10*5,0.15) circle (0.02);
\node [above] at (0.8+7/10*5,0.15) {\footnotesize  15};
\draw [blue,fill] (0.8+7/10*6,0.15) circle (0.02);
\node [above] at (0.8+7/10*6,0.15) {\footnotesize  16};
\draw [blue,fill] (0.8+7/10*7,0.15) circle (0.02);
\node [above] at (0.8+7/10*7,0.15) {\footnotesize  17};
\draw [blue,fill] (0.8+7/10*8,0.15) circle (0.02);
\node [above] at (0.8+7/10*8,0.15) {\footnotesize  18};
%
\end{tikzpicture}
\end{center}
\caption{\small\sl
%
%
The relative layout of 18 vibration sensors installed on a bridge in Ypsilanti, MI. From these sensors, we acquire vibration signals $\{u(t)\}(1), \dots, \{u(t)\}(18)$.
}
\label{csshm:fig:bridge}
\end{figure}
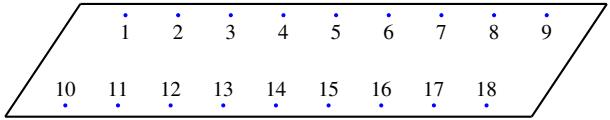

The data that is available to us from each sensor is a set of real-valued $M=3000$ samples collected uniformly in time at a rate faster than the Nyquist rate.
We stack this data into a real-valued data matrix we call $[U]$ (note that we do not assume samples of the analytic signals are available), and we test the effectiveness of multiplying $[U]$ by a random matrix $[\Phi]$ and then computing the SVD of the compressed matrix $[Y] = [U][\Phi]$.
For a point of comparison, we test a method that is similar to one presented in~\cite{CSDD2011,YuequanBao01052011}.
That method, which we refer to as ``CS+FDD,'' involves reconstructing each length-$M$ signal $\{u_l\}$ from the random Gaussian measurements $\{y_l\}=[\Phi]^* \{u_l\}$.
These signals are reconstructed one-by-one and then fed into the Frequency Domain Decomposition (FDD) method~\cite{Brincker_Zhang_Andersen_2000} for modal analysis.
The reconstruction of each $\{u_l\}$ is accomplished by solving
\begin{equation*}
\min_{\{\alpha_l\}}\|\{\alpha_l\}\|_1~\text{s.t.}~\{y_l\}=[\Phi]^*[W]\{\alpha_l\},
\end{equation*}
where $[W]$ represents a DWT matrix (we saw similar results with the DFT), and we let the reconstructed $\{u_l\}=[W]\{\alpha_l\}$.
%

For both the proposed SVD method and the CS+FDD method, we take $M'=50$ measurements of each $\{u_l\}$ using the same measurement matrix $[\Phi]$ for all $l$.
Since we do not know the true mode shapes of the structure, we use as a benchmark the three dominant mode shapes returned when FDD is applied to the original uncompressed data matrix $[U]$.
The results are presented in Figure~\ref{csshm:fig:CSFDD01}.
As we can see, the mode shapes estimated using CS+FDD (plotted in black) are not particularly close to the mode shapes returned when FDD is applied on the original data (plotted in blue).
This is apparently because it is difficult to accurately reconstruct any individual signal $\{u_l\}\in\mathbb{R}^{3000}$ from just $50$ random measurements.
In contrast, however, when we apply the SVD to the compressed data matrix $[Y]$, the estimated mode shapes (plotted in red) provide much better approximations to the true FDD mode shapes.

\begin{figure*}[t]
\centerline{
\begin{minipage}[t]{0.33\linewidth}
  \centering
  \centerline{\includegraphics[width=5cm]{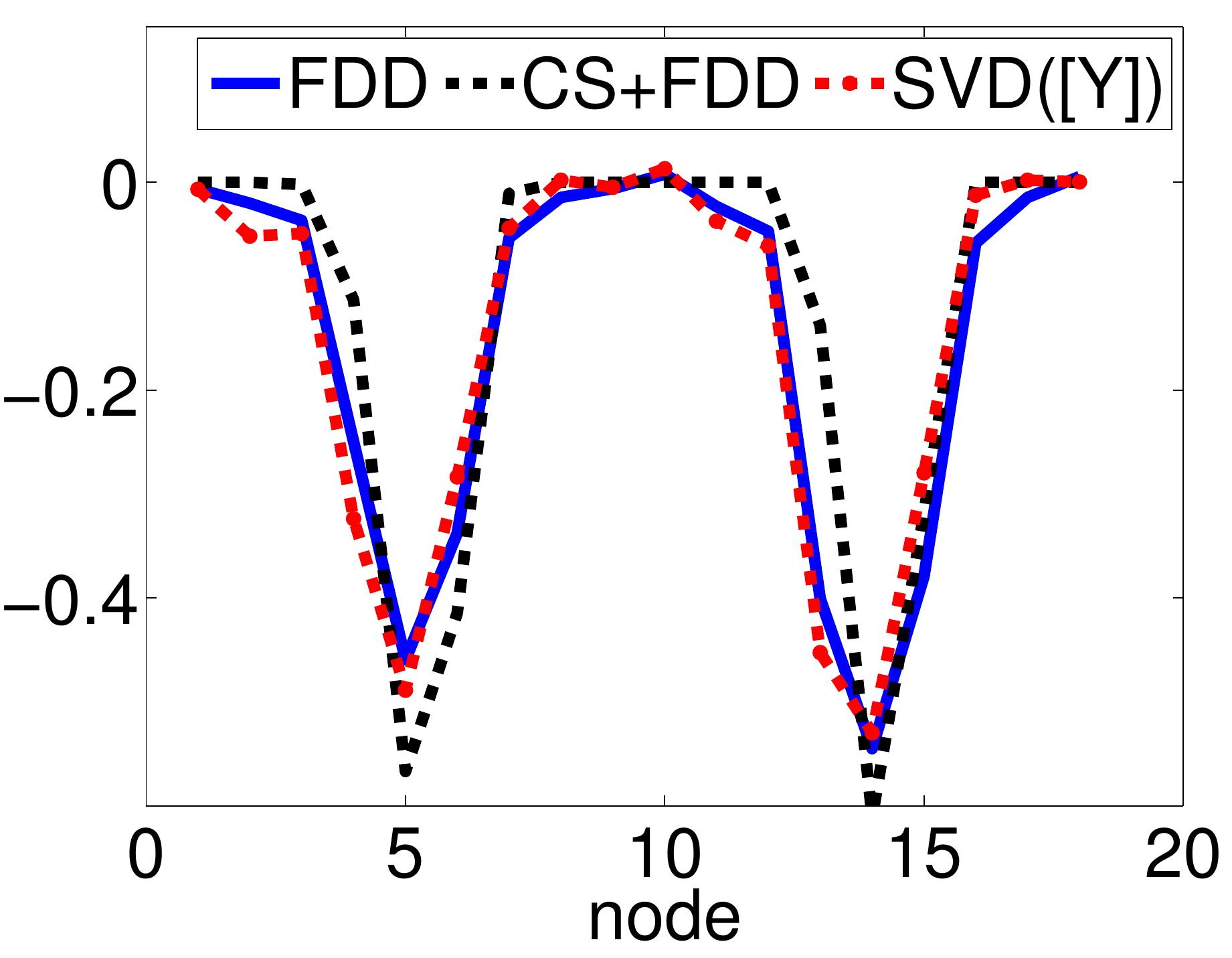}}
  \centerline{\small{(a)}}\medskip
\end{minipage}
\hfil
\begin{minipage}[t]{0.33\linewidth}
  \centering
  \centerline{\includegraphics[width=5cm]{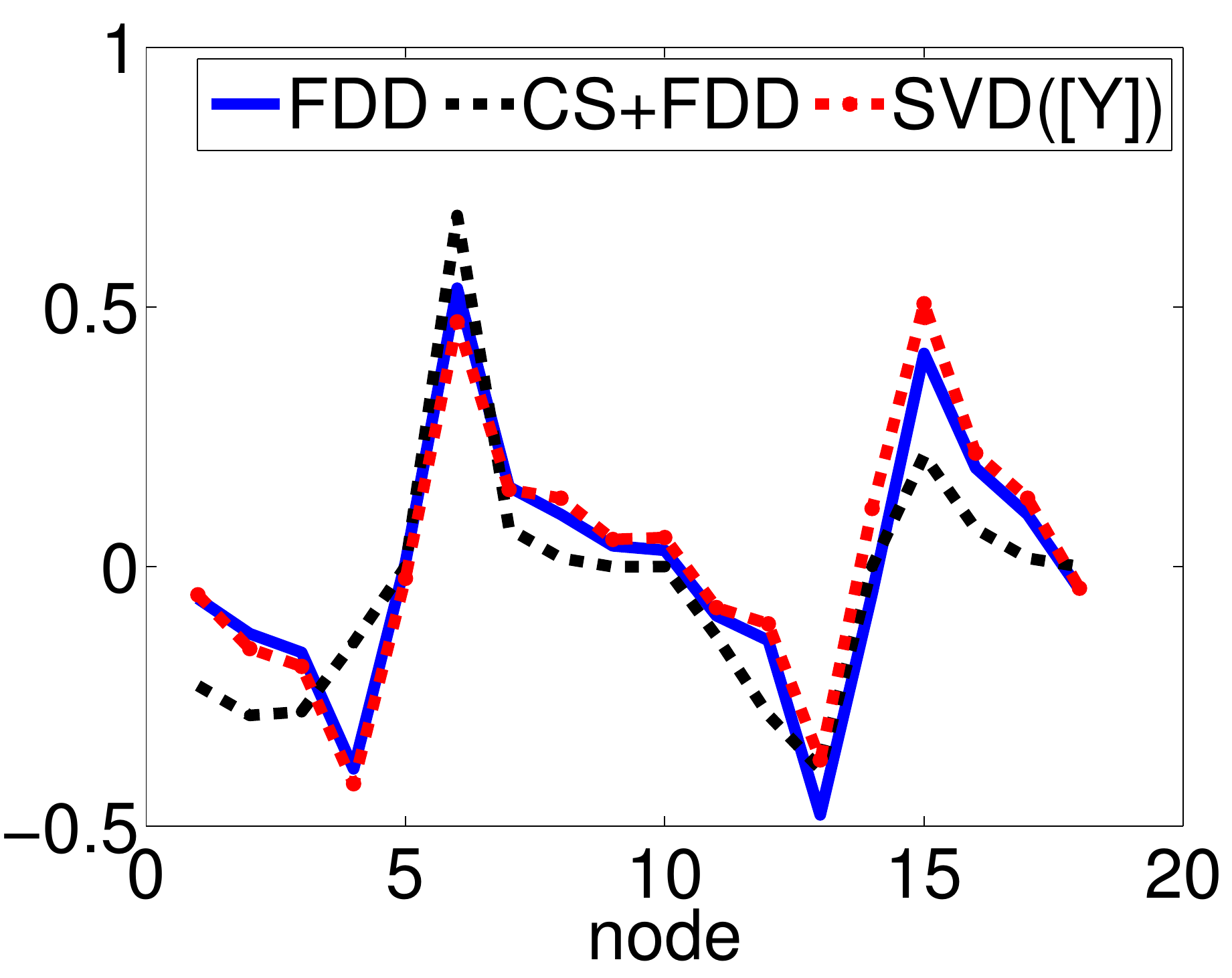}}
  \centerline{\small{(b)}}\medskip
\end{minipage}
\hfil
\begin{minipage}[t]{0.33\linewidth}
  \centering
  \centerline{\includegraphics[width=5cm]{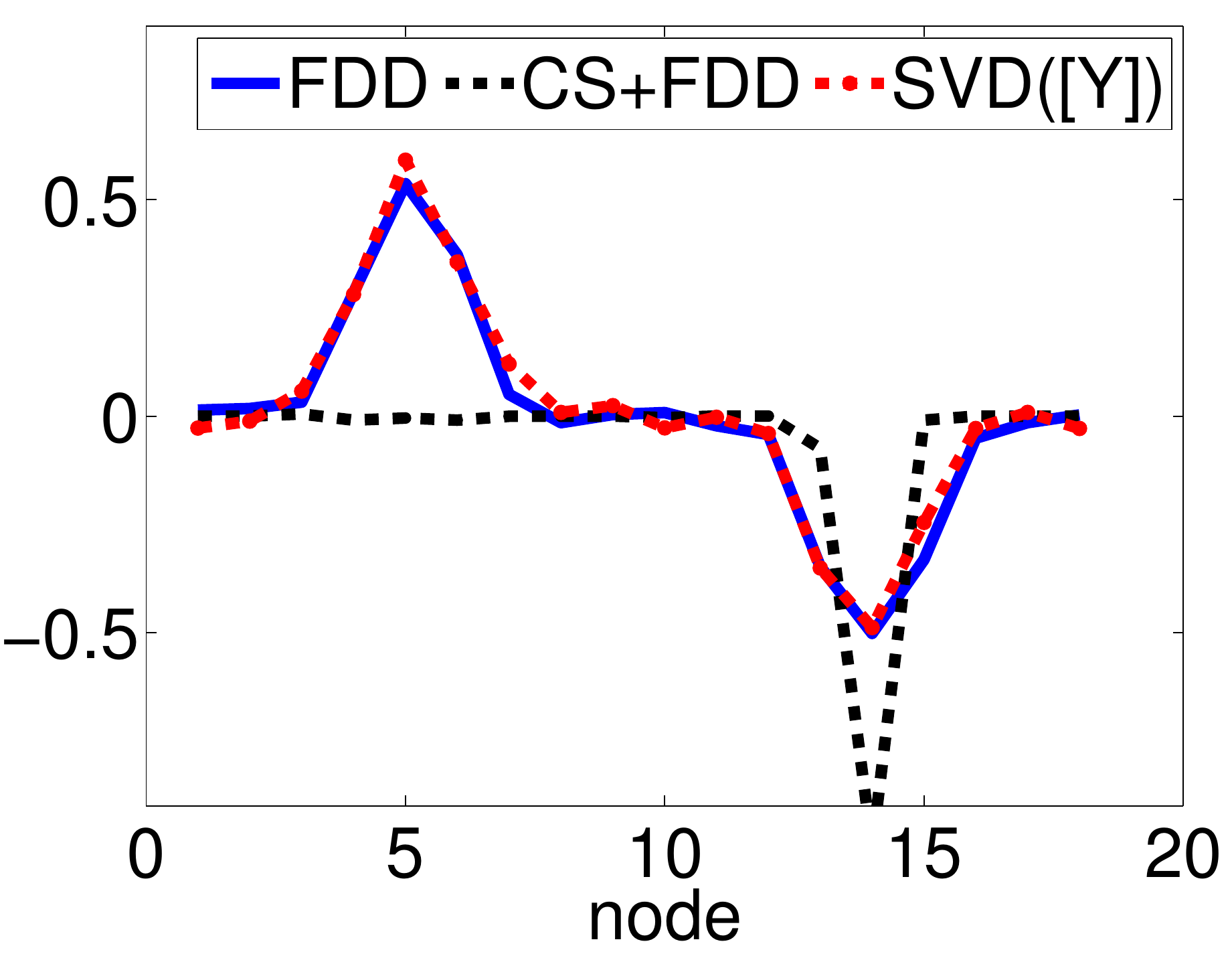}}
  \centerline{\small{(c)}}\medskip
\end{minipage}}
\vspace{-0.2cm}
\caption{\small\sl {Comparison of FDD on the original data matrix with CS+FDD and our proposed method \text{SVD}($[Y]$). Each mode shape returned by CS+FDD and \text{SVD}($[Y]$) is compared against the corresponding mode shape returned by FDD. The results were evaluated by computing the $\ell_2$ distance between the true and the estimated mode shapes. For each of the three dominant FDD mode shapes, the $\ell_2$ distance to the mode shape estimated from compressive measurements is as follows: (a) CS+FDD: $0.35$, \text{SVD}($[Y]$): $0.16$ (b) CS+FDD: $0.96$, \text{SVD}($[Y]$): $0.14$, and (c) CS+FDD: $0.50$, \text{SVD}($[Y]$): $0.19$.}%
}
\label{csshm:fig:CSFDD01}
\end{figure*}

These simulation results indeed support our theoretical results that the SVD of the data matrix $[Y]$ can return accurate estimates of the true mode shape vectors.
We emphasize again that the dataset in this simulation is real-valued, contains noise, and is collected from a real structure {\em with} damping; technically, none of this is covered by the assumptions of our current theoretical results.
The fact that our method was nevertheless able to successfully estimate the mode shape vectors is very encouraging and suggests that our theoretical findings may be extendable to more complicated scenarios.

\section*{Acknowledgment}

The authors would like to thank Sean O'Connor and Prof.~Jerome P. Lynch at the University of Michigan for helpful discussions on modal analysis and for providing us with real datasets to carry out the experiments presented in this paper.

\appendix

In this appendix we provide proofs of the main results.
To do so, we take a perturbation theoretic viewpoint.
We first describe how we can formulate our problem as a perturbation problem and then provide separate proofs for each theorem.

\subsection{Perturbation analysis}

We start with the equation $[V]=[\Psi][\Gamma][S]$, and we allow the sample times $t_1,\dots,t_M$ to be arbitrary.
To carry out perturbation analysis let us note that $[V][V]^{*}=[\Psi][\Gamma][S][S]^{*}[\Gamma]^{*}[\Psi]^{*}$, where $[S][S]^{*}$ is an $N \times N$ matrix with entries
\begin{equation*}
([S][S]^{*})_{l,n} = \left\{ \begin{array}{ll} 1, & l = n, \\ \frac{1}{M}\sum_{m=1}^{M} e^{i(\omega_l-\omega_n)t_m}, & l \neq n. \end{array} \right.
\end{equation*}
Thus, we can decompose this product as $[S][S]^{*}=[I]+[\Delta]$, where $[\Delta]$ contains the off-diagonal entries of $[S][S]^{*}$.
Then
$$
[V][V]^{*}=\underbrace{[\Psi][\Gamma][\Gamma]^{*}[\Psi]^{*}}_{[H]}+\underbrace{[\Psi][\Gamma][\Delta][\Gamma]^{*}[\Psi]^{*}}_{[\delta H]}.
$$
The above expression allows us to view $[V][V]^{*}$ as the summation of a matrix $[H]$ and a matrix $[\delta H]$.
We may view $[\delta H]$ as the perturbation matrix that is being added to $[H]$.
Noting that the eigenvectors of $[H]$ are given by $[\Psi]$, our goal is to show that the eigenvectors of $[V][V]^{*}$ (which equal the left singular vectors of $[V]$) are close to those of $[H]$.
To do this, we follow the approach in~\cite{sketchedgpw12} and employ~\cite[Theorem~1]{Mathias1998}, which provides a perturbation bound on the eigenvectors as a function of the quantity
\begin{align*}
\eta&:=\|[H]^{-\frac{1}{2}}[\delta H][H]^{-\frac{1}{2}}\|_2 \\
&=\| [\Psi]\left([{\Gamma}]  [{\Gamma}]^{*} \right)^{-\frac{1}{2}}[\Psi]^{*}
[\Psi][{\Gamma}] [\Delta] [{\Gamma}]^{*} [\Psi]^{*}
[\Psi] \cdot \\
& \hspace{0.22in} \left([{\Gamma}]  [{\Gamma}]^{*} \right)^{-\frac{1}{2}}[\Psi]^{*}\|_2\\
&=\|\left([{\Gamma}]  [{\Gamma}]^{*} \right)^{-\frac{1}{2}}
[{\Gamma}] [\Delta] [{\Gamma}]^{*}
\left([{\Gamma}]  [{\Gamma}]^{*} \right)^{-\frac{1}{2}}\|_2\\
&=\max_{\{x\}\neq 0}\frac{\{x\}^*\left([{\Gamma}]  [{\Gamma}]^{*} \right)^{-\frac{1}{2}}
[{\Gamma}] [\Delta]^{*} [\Delta] [{\Gamma}]^{*}
\left([{\Gamma}]  [{\Gamma}]^{*} \right)^{-\frac{1}{2}}\{x\}}{\{x\}^*\{x\}}\\
&=\max_{\{y\}\neq 0}\frac{\{y\}^* [\Delta]^{*} [\Delta]\{y\}}{\{y\}^*\{y\}}
\\
&=\| [\Delta] \|_2.
\end{align*}
In the second to last line we let $\{y\}=[{\Gamma}]^{*}\left([{\Gamma}]  [{\Gamma}]^{*} \right)^{-\frac{1}{2}}\{x\}$ and note that $\{y\}^{*}\{y\}=\{x\}^{*}\{x\}$.

Our Theorems~\ref{thm:uniformsampling} and~\ref{thm:randomsamplingFinal} follow by proving that $\| [\Delta] \|_2 \le \epsilon$.
More complete details on how we can apply~\cite[Theorem~1]{Mathias1998} are provided in~\cite{sketchedgpw12}.
We do make two notes here concerning the application of~\cite[Theorem~1]{Mathias1998}.
First, in order to apply this theorem, we require that $[H]$ be positive definite.
This leads to the requirement that $M \ge N$ in all of our results.
Second, in order to obtain a final bound that depends only on the eigenvalues of $[H]$ and not also on those of $[V][V]^{*}$, it is necessary to prove that the the eigenvalues of $[V][V]^{*}$ approximate those of $[H]$.
This fact also follows by proving that $\| [\Delta] \|_2 \le \epsilon$ and by applying~\cite[Lemma~2]{Barlow80computingaccurate}.

To compute an upper bound for $\|[\Delta]\|_2$, let us note that $\lambda_n([S][S]^{*})=\lambda_n([I]+[\Delta])=1+\lambda_n([\Delta])$, where we use $\lambda_n([A])$ to denote the $n$th largest eigenvalue of $[A]$.
If we can find upper and lower bounds on the eigenvalues of $[S][S]^{*}$ such that $\lambda_d\leq\lambda_n([S][S]^{*})\leq \lambda_u$ holds for all $n$,
then we can see that $\lambda_d-1\leq \lambda_n([\Delta]) \leq \lambda_u-1$, which in turn implies that $\|[\Delta]\|_2=\max_n |\lambda_n([\Delta])|\leq \max\{|\lambda_u-1|,|\lambda_d-1|\}$.
Therefore, our strategy is to bound $\lambda_n([S][S]^{*})$ from below and above in order to bound $\| [\Delta] \|_2$.
In the following sections we establish this result for both the random and uniform sampling cases.

\subsection{Proof of Theorem~\ref{thm:randomsamplingFinal} (random sampling)}

Let us first consider how to establish a bound on $\|[\Delta]\|_2$ if we were to sample $t_1,\dots,t_{M}$ uniformly at random in the time interval $[0,t_{\max}]$.
We can establish the following bound on the eigenvalues of $[S][S]^{*}$.

\begin{theorem}
\label{thm:randomsampling}
Given that we sample with $t_{\max}$ satisfying~\eqref{eq:tmaxrandom} and $M$ satisfying \eqref{eq:Mrandom}, then with probability at least $1- \tau$ we will have $1-\epsilon<\lambda_n\left([S][S]^{*}\right)< 1+\epsilon$ for all $n$.
\end{theorem}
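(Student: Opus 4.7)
The plan is to view $[S][S]^{*}$ as a matrix empirical average and reduce the problem to one where matrix Chernoff-style concentration inequalities apply. Writing $\{s_m\}=\bigl(e^{i\omega_1 t_m},\dots,e^{i\omega_N t_m}\bigr)^{T}$, we have
$$[S][S]^{*}=\frac{1}{M}\sum_{m=1}^{M}\{s_m\}\{s_m\}^{*},$$
so $[S][S]^{*}$ is the sample average of $M$ i.i.d.\ rank-one positive semi-definite matrices, each with operator norm exactly $N$. The full goal is to show that with probability at least $1-\tau$, every eigenvalue of this average lies in $(1-\epsilon,1+\epsilon)$.

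The first step is to analyze the expectation. Since the $t_m$ are i.i.d.\ uniform on $[0,t_{\max}]$, a direct computation gives $\mathbb{E}[\{s_m\}\{s_m\}^{*}]=[I]+[B]$, where the diagonal of $[B]$ is zero and the $(l,n)$ off-diagonal entry is the value of $\frac{e^{i(\omega_l-\omega_n)t_{\max}}-1}{i(\omega_l-\omega_n)t_{\max}}$, whose magnitude is at most $2/(|\omega_l-\omega_n|\,t_{\max})$. I would then bound $\|[B]\|_{2}$ by Gershgorin: for any row index $n$, after sorting the modal frequencies so that the neighbors of $\omega_n$ are separated from it by at least $\delta_{\min},2\delta_{\min},\dots$, one obtains $\sum_{l\ne n}|[B]_{l,n}|\le \tfrac{4}{\delta_{\min}t_{\max}}(\log\lfloor N/2\rfloor+1.01)$ (using a standard harmonic-sum upper bound on $H_{\lfloor N/2\rfloor}$ to produce the $1.01$ constant). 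Plugging in the lower bound on $t_{\max}$ from \eqref{eq:tmaxrandom} shows $\|[B]\|_{2}\le \epsilon/10$, so every eigenvalue of $\mathbb{E}[S][S]^{*}$ lies in $[1-\epsilon/10,\;1+\epsilon/10]$.

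Next I would apply a matrix Chernoff inequality in KL-divergence form (as in Tropp's user-friendly tail bounds). Because each $\{s_m\}\{s_m\}^{*}$ is PSD with operator norm $R=N$ and $[S][S]^{*}$ is the sample mean, the bound
$$\Pr\!\bigl[\lambda_{\max}([S][S]^{*})\ge 1+\epsilon\bigr]\le N\exp\!\bigl(-M\,D\bigl((1+\epsilon)/N\,\|\,\mu_{\max}/N\bigr)\bigr)$$
(and an analogous two-sided bound for $\lambda_{\min}$) applies with $\mu_{\max}\le 1+\epsilon/10$ and $\mu_{\min}\ge 1-\epsilon/10$ from the previous step. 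Monotonicity of the KL divergence in its second argument then upgrades these to the exponents $D_{1}$ and $D_{2}$ of the statement.

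Finally, a union bound over the upper-tail and lower-tail events gives a total failure probability at most $2N\exp(-M\min(D_1,D_2))$; requiring this to be at most $\tau$ and solving yields exactly the lower bound \eqref{eq:Mrandom} on $M$ (with the $\log N+\log(2/\tau)$ numerator). The additional constraint $M\ge N$ is preserved from the hypotheses so that $[H]$ remains positive definite, as needed for the perturbation argument in the preceding subsection. The main obstacle I anticipate is the bias-control step: one must carefully pair off the modal frequencies on either side of $\omega_n$ and invoke the harmonic-sum estimate tightly enough to absorb the factor of $40$ in \eqref{eq:tmaxrandom} and the separation between $\epsilon$ and $\epsilon/10$ used in defining $D_{1},D_{2}$; the matrix Chernoff step is then essentially a black-box application once the right normalization $R=N$ is identified.
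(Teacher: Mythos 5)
Your proposal is correct and follows essentially the same route as the paper's proof: the paper likewise writes $[S][S]^{*}$ as an average of i.i.d.\ rank-one PSD matrices with operator norm $N$, computes the expectation's off-diagonal entries as shifted $\sinc$ values bounded by $2/(|\omega_l-\omega_n|t_{\max})$, controls them via Gershgorin plus the pairing/harmonic-number argument to get the $\epsilon/10$ bias, and then applies Tropp's KL-form matrix Chernoff bound with a union bound to obtain the condition on $M$. The only cosmetic difference is that the paper plugs $\tilde{\mu}_{\max}=1+\epsilon/10$ and $\tilde{\mu}_{\min}=1-\epsilon/10$ directly into the tail bounds, so no separate monotonicity step for the divergence is needed.
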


\begin{proof}
To bound $\|[S][S]^{*}\|_2$ we use a slightly modified version of a theorem that appeared in~\cite{JTtailbounds}.
\begin{theorem}\label{thm:JT}[\cite{JTtailbounds}, Theorem 5.1]
Consider a sequence $\{[Z_m]:m=1,\dots,M\}$ of independent, $d$-dimensional, random, self-adjoint matrices that satisfy $[Z_m]\succeq 0$ and $\lambda_{\max}([Z_m])\leq c$ almost surely.
Then for any $\tilde{\mu}_{\min}$ and $\tilde{\mu}_{\max}$ such that
\begin{align*}
\tilde{\mu}_{\min}&\leq\lambda_{\min}\left(\frac{1}{M}\sum_{m=1}^{M}\E [Z_m]  \right)~\text{and} \\
\tilde{\mu}_{\max}&\geq\lambda_{\max}\left(\frac{1}{M}\sum_{m=1}^{M}\E [Z_m]  \right),
\end{align*}
we have
\begin{equation}
P\left\{ \lambda_{\min}\left( \frac{1}{M}\sum_{m=1}^{M}[Z_m] \right) \leq \alpha    \right\}\leq d e^{-MD(\alpha/c ||\tilde{\mu}_{\min}/c)}
\label{eqn:jtlambdamin}
\end{equation}
for $0\leq\alpha\leq\tilde{\mu}_{\min}$ and
\begin{equation}
P\left\{ \lambda_{\max}\left( \frac{1}{M}\sum_{m=1}^{M}[Z_m] \right) \geq \alpha    \right\}\leq d e^{-MD(\alpha/c ||\tilde{\mu}_{\max}/c)}
\label{eqn:jtlambdamax}
\end{equation}
for $\tilde{\mu}_{\max}\leq\alpha\leq c$. 
\end{theorem}
In order to apply Theorem~\ref{thm:JT}, let us write
$$
[S][S]^{*} =\frac{1}{M}\sum_{m=1}^{M}
\left\{
\begin{array}{c}
e^{i\omega_1 t_m}\\
\vdots\\
e^{i\omega_N t_m}
\end{array}
\right\}
\left\{
e^{-i\omega_1 t_m},\dots,e^{-i\omega_N t_m}
\right\}.
$$
We define the vector $\{S_m\}=\{e^{-i\omega_1 t_m},\dots,e^{-i\omega_N t_m}\}^{*}$, where $\|\{S_m\}\|_2^2=N$.
Let us set the matrix $[Z_m]$ that appears in the above theorem as $[Z_m]=\{S_m\}\{S_m\}^{*}$.
As a result, $[Z_m]$ will be i.i.d. positive semi-definite matrices, i.e., $[Z_m]\succeq 0$, of rank 1 with $\lambda_{\max}([Z_m])=\|\{S_m\}\|_2^2= N$.
We wish to compute
$$
\lambda_{\min}\left(\frac{1}{M}\sum_{m=1}^M \E \{S_m\} \{S_m\}^{*} \right)=\lambda_{\min}\left( \E \{S_m\} \{S_m\}^{*} \right),
$$
$$
\lambda_{\max}\left(\frac{1}{M}\sum_{m=1}^M \E \{S_m\} \{S_m\}^{*} \right)=\lambda_{\max}\left( \E \{S_m\} \{S_m\}^{*} \right),
$$
or the appropriate lower and upper bound on the above quantities.
Note that
\begin{align*}
& (\E \{S_m\} \{S_m\}^{*})_{l,n} =
\left\{ \begin{array}{ll} 1, & l = n, \\ \E e^{i(\omega_l-\omega_n)t_m}, & l \neq n \end{array} \right. \\
& ~~~ =
\left\{
\begin{array}{ll}
1, & l=n,\\
e^{i(\omega_l-\omega_n)\frac{t_{\max}}{2}}\sinc((\omega_l-\omega_n)\frac{t_{\max}}{2}),& l\neq n .
\end{array}
\right.
\end{align*}
The eigenvalues can also be written as $\lambda_{n}\left( \E \{S_m\} \{S_m\}^{*} \right)=\lambda_{n}\left( I+[\Delta_S] \right)=1+\lambda_{n}([\Delta_S])$, where $[\Delta_S]$ is the off-diagonal matrix of $\E \{S_m\} \{S_m\}^{*}$.
According to Gershgorin's circle theorem~\cite{ger31} we know that every eigenvalue of $[\Delta_S]$ must lie within at least one Gershgorin disk.
As $[\Delta_S]$ has zero diagonal entries, every Gershgorin disk must be centered at zero.
Thus, the radius of the largest disk will provide a bound on all eigenvalues of $[\Delta_S]$.
It follows that every eigenvalue of $[\Delta_S]$ will obey the following bound:
\begin{align}
|\lambda ([\Delta_S])| &\leq \max_{l} \sum_{n=1,n\neq l}^N \left| \sinc((\omega_l-\omega_n)\frac{t_{\max}}{2})\right| \nonumber \\
&\leq \max_{l} \sum_{n=1,n\neq l}^N \frac{2}{|\omega_l-\omega_n|t_{\max}} \nonumber \\
&\leq \sum_{n=1,n\neq l'}^N \frac{2}{|\omega_{l'}-\omega_n|t_{\max}}\nonumber \\
&\leq\frac{4}{\delta_{\min} t_{\max}} \sum_{n=1}^{\lfloor N/2\rfloor} \frac{1}{n}, \label{eq:lamDel}
\end{align}
where we have denoted the index of the middle row of $[\Delta_S]$ as $l'$ (when $N$ is even we can take either of $N/2$ or $N/2+1$ as the middle row), and we have used the fact that $\omega_n - \omega_{n+1} \ge \delta_{\min}$ for $n = 1,\dots,N-1$.
The summation term in the above bound is also known as the Harmonic number. We can simplify the above expression by using the following bound on the Harmonic number.
\begin{theorem}[\cite{HarmonicBound}, Theorem 1]
For any natural number $N\geq 1$, the following inequality is valid:
\begin{equation*}
\frac{1}{2N+\frac{1}{1-\gamma}-2}\leq \sum_{n=1}^{N} \frac{1}{n} -\log(N)-\gamma <\frac{1}{2N+\frac{1}{3}}.
\end{equation*}
The constant $\gamma=0.57721\cdots$ is known as Euler's constant. The constants $\frac{1}{1-\gamma}-2=0.3652\cdots$ and $\frac{1}{3}$ are the best possible, and equality holds only for $N=1$.
\end{theorem}
Applying this theorem to \eqref{eq:lamDel}, we have
\begin{align*}
|\lambda([\Delta_S])|&< \frac{4(\log\lfloor N/2 \rfloor+\gamma +\frac{1}{2\lfloor N/2\rfloor+\frac{1}{3}})}{\delta_{\min} t_{\max}} \\
&< \frac{4(\log\lfloor N/2 \rfloor+\gamma +3/7) }{\delta_{\min} t_{\max}} \\
&< \frac{4(\log\lfloor N/2 \rfloor+1.01) }{\delta_{\min} t_{\max}} .
\end{align*}
Collecting everything together, we will have for all $n$,
\begin{align*}
1- \frac{4(\log\lfloor N/2 \rfloor+1.01) }{\delta_{\min} t_{\max}} &<\lambda_n\left( \E \{S_m\} \{S_m\}^{*} \right) \\ &=1+\lambda_n([\Delta_S]) \\ &< 1+ \frac{4(\log\lfloor N/2 \rfloor+1.01) }{\delta_{\min} t_{\max}}. 
\end{align*}
Supposing that~\eqref{eq:tmaxrandom} is satisfied, we have that
$$
\tilde{\mu}_{\min}:=1-\frac{\epsilon}{10}<\lambda_n(\E \{S_m\}\{S_m\}^{*})< 1+\frac{\epsilon}{10}=:\tilde{\mu}_{\max}.
$$
Note that $\tilde{\mu}_{\max}\leq 1+\epsilon$ and $\tilde{\mu}_{\min}\geq 1-\epsilon$. Then, according to the above theorem, inequality~\eqref{eqn:jtlambdamax} will hold for any $1+\frac{\epsilon}{10}\leq\alpha\leq N$, which will always include $\alpha=1+\epsilon$.
Similarly, inequality~\eqref{eqn:jtlambdamin} will hold for any $0\leq\alpha\leq 1-\frac{\epsilon}{10}$, which will always include $\alpha=1-\epsilon$.
Substituting the appropriate values of $\alpha=1\pm\epsilon$, $\tilde{\mu}_{\max}$, and $\tilde{\mu}_{\min}$ into Theorem~\ref{thm:JT}, with probability at least $1-Ne^{-MD_1}-Ne^{-MD_2}$ we will have
\begin{equation*}
1-\epsilon<\lambda_{\min}([S][S]^{*})\leq\lambda_n([S][S]^{*}) \leq\lambda_{\max}([S][S]^{*})< 1+\epsilon.
\end{equation*}
By choosing $M$ to satisfy~\eqref{eq:Mrandom}, we can ensure both that $Ne^{-MD_1} < \tau/2$ and that $Ne^{-MD_2} < \tau/2$, and therefore the overall failure probability will be less than $\tau$.
\end{proof}

\subsection{Proof of Theorem~\ref{thm:uniformsampling} (uniform sampling)}

For the uniform sampling scenario, we can establish the following theorem on the eigenvalues of $[S][S]^{*}$.
\begin{theorem}
Suppose we sample with a total time span $t_{\max}$ satisfying~\eqref{eq:tmaxuniform} with sampling interval $T_s=\frac{\pi}{\delta_{\max}}$ and ensure that $M \ge N$. Or, equivalently, suppose we take $M$ total samples with $M$ satisfying~\eqref{eq:Muniform} and with sampling interval $T_s=\frac{\pi}{\delta_{\max}}$.
Then we establish the following bound on the eigenvalues of $[S][S]^{*}$: $1-\epsilon\leq\lambda_n([S][S]^{*})\leq1+\epsilon$.
\end{theorem}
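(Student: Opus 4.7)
The plan is to bound $\|[\Delta]\|_2$ where $[\Delta] = [S][S]^{\ast} - I$, and then invoke the perturbation machinery already assembled in the Appendix (the claim $1-\epsilon \le \lambda_n([S][S]^{\ast}) \le 1+\epsilon$ follows immediately once $\|[\Delta]\|_2 \le \epsilon$). For uniform sampling with $t_m = (m-1)T_s$, each off-diagonal entry of $[S][S]^{\ast}$ is a finite geometric sum,
\[
[\Delta]_{l,n} = \frac{1}{M}\sum_{m=1}^{M} e^{i(\omega_l-\omega_n)(m-1)T_s} = \frac{1}{M}\cdot\frac{e^{iM(\omega_l-\omega_n)T_s}-1}{e^{i(\omega_l-\omega_n)T_s}-1},
\]
whose magnitude is the familiar Dirichlet-kernel expression bounded by $\frac{1}{M|\sin((\omega_l-\omega_n)T_s/2)|}$.

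The decisive observation is that the prescribed sampling interval $T_s = \pi/\delta_{\max}$ keeps the argument of the inner sine inside $[-\pi/2,\pi/2]$, since $|\omega_l-\omega_n|\le \delta_{\max}$. In that range Jordan's inequality $|\sin(x)| \ge (2/\pi)|x|$ applies uniformly, yielding $|\sin((\omega_l-\omega_n)T_s/2)| \ge |\omega_l-\omega_n|T_s/\pi$. Combining this with $MT_s = t_{\max} + T_s > t_{\max}$, I get the entry-wise bound $|[\Delta]_{l,n}| < \pi/(t_{\max}|\omega_l-\omega_n|)$. From here the argument mirrors the random-sampling proof: Gershgorin's theorem on the hollow matrix $[\Delta]$ gives
\[
\|[\Delta]\|_2 \le \max_l \sum_{n\neq l}|[\Delta]_{l,n}| \le \frac{\pi}{t_{\max}}\cdot\max_l\sum_{n\neq l}\frac{1}{|\omega_l-\omega_n|},
\]
and using $|\omega_l-\omega_n| \ge |l-n|\delta_{\min}$ together with the observation that the worst-case row is the middle one bounds the inner sum by $2H_{\lfloor N/2\rfloor}/\delta_{\min}$. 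Applying the same harmonic-number inequality used in the random-sampling proof gives $H_{\lfloor N/2\rfloor} \le \log\lfloor N/2\rfloor + 1.01$, whence
\[
\|[\Delta]\|_2 \le \frac{2\pi(\log\lfloor N/2\rfloor + 1.01)}{t_{\max}\,\delta_{\min}}.
\]
The hypothesis on $t_{\max}$ in \eqref{eq:tmaxuniform} is exactly what is needed to make this at most $\epsilon$. Equivalence with the alternative hypothesis on $M$ in \eqref{eq:Muniform} is immediate from $t_{\max}=(M-1)T_s$ with $T_s=\pi/\delta_{\max}$, and $M\ge N$ is carried over from the perturbation setup.

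I expect the main delicate point to be the verification that $|(\omega_l-\omega_n)T_s/2|\le\pi/2$ so that Jordan's inequality can be applied uniformly across all entries; without the specific choice $T_s=\pi/\delta_{\max}$ one would have to fall back on a weaker pointwise bound and lose control of the constants. A minor bookkeeping subtlety is using $MT_s = t_{\max}+T_s$ rather than $MT_s = t_{\max}$ to avoid an off-by-one; this costs nothing since $MT_s > t_{\max}$ only strengthens the bound. Everything else reduces to the same Gershgorin/harmonic-number calculation already performed for the random-sampling theorem, so the remaining steps are purely routine.
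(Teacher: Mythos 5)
Your proposal is correct and follows essentially the same route as the paper's proof: the same Dirichlet-kernel form for the off-diagonal entries, the same use of $T_s=\pi/\delta_{\max}$ to justify Jordan's inequality on the denominator (your direct bound of the numerator sine by $1$ is algebraically identical to the paper's $|\sinc(x)|\le 1/|x|$ step), the same Gershgorin/middle-row/harmonic-number estimate borrowed from the random-sampling argument, and the same $(M-1)T_s=t_{\max}$ bookkeeping. No gaps.
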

\begin{proof}
The off-diagonal matrix of $[S][S]^{*}$, denoted as $[\Delta]$, has the following entries: $[\Delta]_{l,n} = 0$ when $l = n$, and
\begin{align*}
[\Delta]_{l,n}
&=\frac{1}{M}\sum_{m=0}^{M-1}e^{i(\omega_p-\omega_q)mT_s}
=\frac{1}{M}\frac{1-e^{i(\omega_l-\omega_n)MT_s}}{1-e^{i(\omega_l-\omega_n)T_s}}\\
&=e^{i(\omega_l-\omega_n)T_s(M-1)/2}\frac{\sin((\omega_l-\omega_n)\frac{MT_s}{2})}{M\sin((\omega_l-\omega_n)\frac{T_s}{2})}
\\ &=e^{i(\omega_l-\omega_n)T_s(M-1)/2}\frac{\sin(|\omega_l-\omega_n|\frac{MT_s}{2})}{M\sin(|\omega_l-\omega_n|\frac{T_s}{2})}
\end{align*}
when $l \neq n$.
The fraction of sinusoids in the above equation is known as the periodic sinc function or the Dirichlet function and is defined as $\psinc(x)=\frac{\sin(M\frac{x}{2})}{M\sin(\frac{x}{2})}$.
More specifically,
\begin{equation*}
\psinc(x)=
\left\{
\begin{array}{ll}
\frac{\sin(M\frac{x}{2})}{M\sin(\frac{x}{2})}, & x\neq 2\pi k, ~ k=0,\pm1,\pm2,\dots \\
(-1)^{k(M-1)}, & x= 2\pi k, ~ k=0,\pm1,\pm2,\dots.
\end{array}
\right.
\end{equation*}
As its name implies, the $\psinc$ function is a periodic function where the period is equal to $2\pi$ when $M$ is odd, and $4\pi$ when $M$ is even.
Every time $x$ is equal to an integer multiple of $2\pi$, the $\psinc$ function will reach its maximum value.

Again, we bound the eigenvalues of $[\Delta]$ using Gershgorin's disk theorem.
Since every Gershgorin disk will again be centered at zero, every eigenvalue of $[\Delta]$ must obey the following bound:
\begin{equation}
|\lambda([\Delta])| \leq 
\max_{l} \sum_{n=1, n \neq l}^N |\psinc(|\omega_l-\omega_n| T_s)|.
\label{eqn:uniformDelta111}
\end{equation}
We can guarantee the evaluation of the $\psinc$ function to be small by restricting ourselves to only certain values of $T_s$ and $M$.
We compute an upper bound on the $\psinc$ function by noting that the denominator can be lower bounded by a linear function for a certain range of $T_s$.
In particular, for $T_s\leq\frac{\pi}{|\omega_l-\omega_n|}$ we have that $\sin\left(|\omega_l-\omega_n|\frac{T_s}{2} \right)\geq \frac{|\omega_l-\omega_n|T_s}{\pi}$.
Applying this to \eqref{eqn:uniformDelta111}, we have
\begin{align*}
|\lambda([\Delta])| &\leq \max_{l} \sum_{n=1, n \neq l}^N  \left|\frac{\pi\sin(|\omega_l-\omega_n|\frac{MT_s}{2})}{|\omega_l-\omega_n|MT_s}\right|
\\ &= \max_{l} \sum_{n=1, n \neq l}^N \frac{\pi}{2}\left|\sinc(|\omega_l-\omega_n|\frac{MT_s}{2})\right| \\
&\le \max_{l} \sum_{n=1, n \neq l}^N \frac{\pi}{|\omega_l-\omega_n|MT_s}
\end{align*}
when $T_s\leq \frac{\pi}{|\omega_l-\omega_n|}$ for all $l$ and $n$. In the last line we have used the fact that $|\sinc(x)|\leq1/|x|$. To ensure that $T_s\leq \frac{\pi}{|\omega_l-\omega_n|}$ for all $l$ and $n$, we choose the sampling interval such that $T_s \leq \frac{\pi}{\delta_{\max}}$.
Following the same arguments we used in the proof of Theorem~\ref{thm:randomsamplingFinal}, this will give us
\begin{align*}
|\lambda([\Delta])|
&\leq\frac{\pi}{\delta_{\min}MT_s}
\sum_{n=1}^{\lfloor N/2\rfloor} \frac{2}{n}
<
\frac{2\pi(\log\lfloor N/2 \rfloor+1.01)}{\delta_{\min}MT_s}
 \\ &<
\frac{2\pi(\log\lfloor N/2 \rfloor+1.01)}{\delta_{\min}(M-1)T_s}.
\end{align*}
Now, note that $(M-1)T_s=t_{\max}$ and if we choose $t_{\max}$ so that
$$
t_{\max}\geq
\frac{2\pi(\log\lfloor N/2 \rfloor+1.01)}{\delta_{\min}\epsilon},
$$
we will have that
$
|\lambda([\Delta])|<\epsilon.
$
In summary, when we sample within a sampling interval satisfying $T_s\leq  \frac{\pi}{\delta_{\max}}$ and a time span satisfying
$t_{\max}\geq
\frac{2\pi(\log\lfloor N/2 \rfloor+1.01)}{\delta_{\min}\epsilon}$, we will have
$$
1-\epsilon<\lambda_n([S][S]^{*})=1+\lambda_n([\Delta])<1+\epsilon.
$$
Or, in other words, if we set $T_s=\frac{\pi}{\delta_{\max}}$, and remembering that $t_{\max}=(M-1)T_s$, this means that
we need to sample at least
$$
M\geq\frac{2(\log\lfloor N/2 \rfloor+1.01)}{\epsilon}\frac{\delta_{\max}}{\delta_{\min}}+1,
$$
to achieve the above eigenvalue guarantee. Note that this is the smallest number of measurements we need since we set $T_s$ as large as possible.
If we were to reduce the sampling interval we would need to take more measurements to cover the same time span $t_{\max}$.
\end{proof}

\subsection{Proof of Theorem~\ref{thm:uniformandrandom} (uniform sampling with matrix multiplication)}

This result can be shown by simply using the triangle inequality. Let us write
$$
\|\{\psi_n\}-\{\tilde{\psi}_n\}\|_2\leq \|\{\psi_n\}-\{\hat{\psi}_n\}\|_2+\|\{\hat{\psi}_n\}-\{\tilde{\psi}_n\}\|_2,
$$
where $\{\hat{\psi}_1\},\dots,\{\hat{\psi}_N\}$ denote the left singular vectors of $[Y]$.
Note that the first term on the right hand side of the above inequality represents the error in the mode shape vectors due to the uniform sampling scheme as presented in Theorem~\ref{thm:uniformsampling}. The second term represents the difference between the left singular vectors of $[V]$ and those of $[Y]$. To quantify the amount of this error we make use of~\cite[Theorem~1]{sketchedgpw12}. Substituting the upper bound for each term completes the proof.

\bibliographystyle{IEEEtran}
\bibliography{refs-intro}

\end{document}